\newcommand*{\addFileDependency}[1]{
  \typeout{(#1)}
  \@addtofilelist{#1}
  \IfFileExists{#1}{}{\typeout{No file #1.}}
}
\newcommand{\dep}{\textsc{dep}}
\newtheorem{theorem}{Theorem}[section]
\newtheorem{definition}{Definition}[section]
\newtheorem{lemma}[theorem]{Lemma}
\newtheorem{corollary}[theorem]{Corollary}
\newcommand{\yw}[1]{\textcolor{orange}{\bf\small [yi: #1]}}
\DeclareMathOperator{\eu}{V}
\title{Temporal Induced Self-Play for Stochastic Bayesian Games}
\author{
Weizhe Chen$^1$\footnote{Equal Contribution}\and
Zihan Zhou$^2$\footnotemark[\value{footnote}]\and
Yi Wu$^{2, 3}$\And
Fei Fang$^4$\\
\affiliations
$^1$Shanghai Jiao Tong University\\
$^2$Shanghai Qi Zhi Institute\\
$^3$Tsinghua University\\
$^4$Carnegie Mellon University\\
\emails
chenweizhe@sjtu.edu.cn,
\{footoredo, jxwuyi\}@gmail.com,
feif@cs.cmu.edu
}
\begin{document}

\maketitle

\begin{abstract}
One practical requirement in solving dynamic games is to ensure that the players play well from any decision point onward. To satisfy this requirement, existing efforts focus on equilibrium refinement, but the scalability and applicability of existing techniques are limited.
In this paper, we propose Temporal-Induced Self-Play (TISP), a novel reinforcement learning-based framework to find strategies with decent performances from any decision point onward. TISP uses belief-space representation, backward induction, policy learning, and non-parametric approximation.
Building upon TISP, we design a policy-gradient-based algorithm TISP-PG. 
We prove that TISP-based algorithms can find approximate Perfect Bayesian Equilibrium in zero-sum one-sided stochastic Bayesian games with finite horizon. We test TISP-based algorithms in various games, including finitely repeated security games and a grid-world game. The results show that TISP-PG is more scalable than existing mathematical programming-based methods and significantly outperforms other learning-based methods.
\end{abstract}

\section{Introduction}
Many real-world problems involve multiple decision-makers interacting strategically. Over the years, a significant amount of work has focused on building game models for these problems and designing computationally efficient algorithms to solve the games~\cite{serrino2019finding,nguyen2019deception}. 
While Nash equilibrium (NE) is a well-accepted solution concept, the players' behavior prescribed by an NE can be irrational off the equilibrium path: one player can threaten to play a suboptimal action in a future decision point to convince the other players that they would not gain from unilateral deviation. Such ``non-credible threats'' restrict the practical applicability of these strategies as in the real world, one may make mistakes unexpectedly, and it is hard to enforce such threats.
Thus it is important to find strategy profiles such that each player's strategy is close to optimal (in expectation) from any point onward given the other players' strategies.

To find such strategy profiles, researchers have proposed equilibrium refinement concepts such as subgame perfect equilibrium and perfect Bayesian equilibrium (PBE)~\cite{cho1987signaling}
and studied the computational complexity~\cite{an2011refinement,etessami2014complexity,hansen2018computational}.
However, existing methods for computing refined equilibria have limited scalability and often require full access to the game environment, thus can hardly apply to complex games and real-world problems (as detailed in Section~\ref{sec:relatedwork}).
On the other hand, deep reinforcement learning (RL) has shown great promise in complex sequential decision-making problems for single-agent and multi-agent settings~\cite{mnih2015human,silver2018general}.
Deep RL leverages a compact representation of the game's state and the players' action space, making it possible to handle large games that are intractable for non-learning-based methods.
Despite the promise, to our knowledge, no prior work has applied deep RL to \textit{equilibrium refinements}.
In this paper, we focus on two-player stochastic Bayesian games with finite horizon as they can be used to model various long-term strategic interactions with private information~\cite{albrecht2013game}. We propose Temporal-Induced Self-Play (TISP), the first RL-based framework to find strategy profiles with decent performances from any decision point onward. 
There are several crucial challenges in using RL for this task. 
First, in these games, a player's action at a decision point should be dependent on the entire history of states and joint actions. As the number of histories grows exponentially, a tabular approach that enumerates all the histories is intractable. Although recurrent neural networks (RNNs) can be used to encode the history, RNNs are typically brittle in training and often fail to capture long-term dependency in complex games. Second, using standard RL algorithms with self-play suffers from limited exploration. Hence, it is extremely hard to improve the performance on rarely visited decision points. Our framework TISP tackles these two challenges jointly. We use a belief-based representation to address the first challenge, so that the policy representation remains constant in size regardless of the number of rounds. Besides, we use backward induction to ensure exploration in training. TISP also uses non-parametric approximation in the belief space. 
Building upon TISP, we design TISP-PG approach that uses policy gradient (PG) for policy learning. TISP can also be combined with other game-solving techniques such as counterfactual regret minimization (CFR)~\cite{zinkevich2007regret}. Further, we prove that TISP-based algorithms can find approximate PBE in zero-sum stochastic Bayesian games with one-sided incomplete information and finite horizon.  We evaluate TISP-based algorithms in different games. We first test them in finitely repeated security games with unknown attacker types whose PBE can be approximated through mathematical programming (MP) under certain conditions~\cite{nguyen2019deception}. Results show that our algorithms can scale to larger games and apply to more general game settings, and the solution quality is much better than other learning-based approaches. We also test the algorithms in a two-step matrix game with a closed-form PBE. Our algorithms can find close-to-equilibrium strategies. Lastly, we test the algorithms in a grid-world game, and the experimental results show that TISP-PG performs significantly better than other methods.

\section{Related Work}

\label{sec:relatedwork}
The study of equilibrium refinements is not new in economics~\cite{kreps1982sequential}.
In addition to the backward induction method for perfect information games, mathematical programming (MP)-based methods~\cite{nguyen2019deception,farina2017extensive,miltersen2010computing} have been proposed to compute refined equilibria.
However, the MPs used are non-linear and often have an exponential number of variables or constraints, resulting in limited scalability. A few works use iterative methods~\cite{farina2017regret,kroer2017smoothing} but they require exponentiation in game tree traversal and full access to the game structure, which limits their applicability to large complex games.

Stochastic Bayesian games have been extensively studied in mathematics and economics~\cite{forges1992repeated,sorin2003stochastic,chandrasekaran2017markov,albrecht2013game}. 
\cite{albrecht2016belief} discussed the advantage of using type approximation to approximate the behaviors of agents to what have already been trained, to reduce the complexity in artificial intelligence (AI) researches.
We focus on equilibrium refinement in these games and provide an RL-based framework.

Various classical multi-agent RL algorithms~\cite{littman1994markov,hu1998multiagent} are guaranteed to converge to an NE. 
Recent variants~\cite{heinrich2015fictitious,lowe2017multi,iqbal2019actor} leverage the advances in deep learning~\cite{mnih2015human} and have been empirically shown to find well-performing strategies in  large-scale games, such as Go~\cite{silver2018general} and StarCraft~\cite{vinyals2019grandmaster}.
We present an RL-based approach for equilibrium refinement. 

Algorithms for solving large zero-sum imperfect information games like Poker~\cite{moravvcik2017deepstack,brown2018superhuman} need to explicitly reason about beliefs. 
Many recent algorithms in multi-agent RL use belief space policy or reason about joint beliefs. These works assume a fixed set of opponent policies that are unchanged~\cite{shen2019robust}, or consider specific problem domains~\cite{serrino2019finding,Woodward_Finn_Hausman_2020}. Foerster \emph{et al.}\shortcite{foerster2019bayesian} uses public belief state to find strategies in a fully cooperative partial information game. We consider stochastic Bayesian games and use belief over opponents' types.


\section{Preliminaries}

\subsection{One-Sided Stochastic Bayesian Game}
For expository purposes, we will mainly focus on what we call one-sided stochastic Bayesian games (OSSBG), which extends finite-horizon two-player stochastic games with type information.
In particular, player 1 has a private \textit{type} that affects the payoff function. Hence, in a competitive setting, player 1 needs to hide this information, while player 2 needs to infer the type from player 1's actions. Our algorithms can be extended to handle games where both players have types, as we will discuss in Section \ref{sec:conclusion}. 
Formally, an OSSBG is defined by a 8-tuple $\Gamma=\langle\Omega, \mu^0, \Lambda, p^0, \mathcal{A}, P, \{u_i^\lambda\},L, \gamma\rangle$. 
$\Omega$ is the state space. 
$\mu^0$ is the initial state distribution.
$\Lambda=\{1,\dots, |\Lambda|\}$ is the set of types for player 1.
$p^0$ is the prior  over player 1's type.
$\mathcal{A}=\prod_i\mathcal{A}_i$ is the joint action space with $\mathcal{A}_i$ the action space for player $i$. 
$P: \Omega \times \mathcal{A} \rightarrow \Delta_{|\Omega|}$ is the transition function where $\Delta_{|\Omega|}$ represents the $|\Omega|$-dimensional probability simplex. 
$u_i^{\lambda}: \Omega \times \mathcal{A} \rightarrow \mathbb{R}$ is the payoff function for player $i$ given player 1's type $\lambda$.
$L$ denotes the length of the horizon or number of rounds.
$\gamma$ is the discount factor.

One play of an OSSBG starts with a type $\lambda$ sampled from $p^0$ and an initial state  $s^0$ sampled from $\mu^0$. Then, $L$ \textit{rounds} of the game will rollout. In round $l$, players take actions $a^l_1\in\mathcal{A}_1$ and $a^l_2\in\mathcal{A}_2$ simultaneously and independently, based on the history $h^l:=\{s^0,(a_1^0,a_2^0),\dots,(a_1^{l-1},a_2^{l-1}),s^l\}$. The players will then get payoff $u^\lambda_i(s^t,a^l_1,a^l_2)$. Note that the payoff $u_i^{\lambda}(s^l, a_1, a_2)$ at every round $l$ will not be revealed until the end of every play on the game to prevent type information leakage. The states transit w.r.t. $P(s^{l+1}|s^l,a^l_1,a^l_2)$ across rounds.

Let $\mathcal{H}^l$ denote the set of all possible histories in round $l$ and $\mathcal{H}=\bigcup_l\mathcal{H}^l$. Let $\pi_1:\Lambda\times\mathcal{H}\rightarrow \Delta_{\mathcal{A}_1}, \pi_2:\mathcal{H}\rightarrow \Delta_{\mathcal{A}_2}$ be the players'  behavioral strategies. Given the type $\lambda$, the history $h^l$ and the strategy profile $\pi=(\pi_1,\pi_2)$, player $i$'s discounted accumulative expected utility from round $l$ onward is 
\begin{align}
    &\eu_i^{\lambda}(\pi, h^l) =  \sum_{a_1, a_2} 
    \bigg( \pi_1(a_1|\lambda, h^l) \pi_2(a_2|h^l) \Big(u_i^{\lambda}(s^l, a_1, a_2) \nonumber \\
    &\, + \gamma \sum_{s'}P(s'|s^l, a_1, a_2) \eu_i^{\lambda}(\pi,h^l \cup \{(a_1, a_2), s'\})\Big) \bigg). \label{eq:eu}
\end{align}
Similarly, we can define the Q function by 
$Q_i^\lambda(\pi,h^l,a)=u_i^\lambda(s^l,a_1,a_2)+\gamma\mathbb{E}_{s'}\left[V_i^\lambda(\pi,h^l\cup\{(a_1,a_2),s'\})\right]$.

An OSSBG can be converted into an equivalent extensive-form game (EFG) with imperfect information where each node in the game tree corresponds to a (type, history) pair (see Appendix~\ref{app-sec:main-proof}). 
However, this EFG is exponentially large, and existing methods for equilibrium refinement in EFGs~\cite{kroer2017smoothing} are not suitable due to their limited scalability.

\subsection{Equilibrium Concepts in OSSBG} \label{sec:equilibrium_concepts}
Let $\Pi_i$ denote the space of all valid strategies for Player $i$.

\begin{definition}[$\epsilon$-NE]
	 A strategy profile $\pi=(\pi_1, \pi_2)$ is an $\epsilon$-NE if for $i=1,2$, and all visitable history $h^l$ corresponding to the final policy,
	 \begin{equation}
	 \max_{\pi'_i\in\Pi_i}\eu_i(\pi_i|_{h^l\rightarrow \pi'_i},\pi_{-i}, h^l)-\eu_i(\pi, h^l)\leq \epsilon
	 \end{equation}
	 where $\pi_i|_{h^l\rightarrow \pi'_i}$ means playing $\pi_i$ until $h^l$ is reached, then playing $\pi'_i$ onwards.
\end{definition}

\begin{definition}[$\epsilon$-PBE]
	 A strategy profile $\pi=(\pi_1, \pi_2)$ is an $\epsilon$-PBE if for $i=1,2$ and all histories $h^l$,
	\begin{equation}
		\max_{\pi'_i\in\Pi_i}\eu_i(\pi_i|_{h^l\rightarrow \pi'_i},\pi_{-i},h^l)-\eu_i(\pi,h^l)\leq \epsilon
	\end{equation}
	where $\pi_i|_{h^l\rightarrow \pi'_i}$ means playing $\pi_i$ until $h^l$ is reached, then playing $\pi'_i$ onwards.
\end{definition}
It is straightforward that an $\epsilon$-PBE is also an $\epsilon$-NE. 

\section{Temporal-Induced Self-Play}



Our TISP framework (Alg.~\ref{alg:ti_framework}) considers each player as an RL agent and trains them with self-play. Each agent maintains a policy and an estimated value function, which will be updated during training.
TISP has four ingredients. It uses belief-based representation (Sec.~\ref{sec:belief_algo}), backward induction (Sec.~\ref{sec:bi}), policy learning (Sec.~\ref{sec:policy}) and belief-space approximation (Sec.~\ref{sec:approx}).
We discuss test-time strategy and show that TISP converges to $\epsilon$-PBEs under certain conditions in Sec. ~\ref{sec:test-time}. 

\subsection{Belief-Based Representation}\label{sec:belief_algo}

Instead of representing a policy as a function of the history, we consider player 2's belief $b\in \Delta_{|\Lambda|}$ of player 1's type and represent $\pi_i$ as a function of the belief $b$ and the current state $s^l$ in round $l$, i.e., $\pi_{1,l}(\cdot|b,s^l,\lambda)$ and $\pi_{2,l}(\cdot|b,s^l)$.
The belief $b$ represents the posterior probability distribution of $\lambda$ and can be obtained using Bayes rule given player 1's strategy:
\begin{align}
    b_{\lambda}^{l+1}=\frac{\pi_{1,l}\left(a_1^l|s^l,b_\lambda^l,\lambda\right)b^l_{\lambda}}{\sum_{\lambda' \in \Lambda}\pi_{1,l}\left( a_1^l|s^l,b_{\lambda'}^l,\lambda' \right) b^l_{\lambda'}}\label{eq:next_belief}
\end{align}
where $b^l_\lambda$ is the probability of player 1 being type $\lambda$ given all its actions up to round $l-1$.
This belief-based representation avoids the enumeration of the exponentially many histories.
Although it requires training a policy that outputs an action for any input belief in the continuous space, it is possible to use approximation as we show in Sec.~\ref{sec:approx}.
We can also define the belief-based value function for agent $i$ in round $l$ by 
{
\begin{align}
V_{i,l}^\lambda(\pi,b^l,s^l)=\mathbb{E}_{a_1,a_2,s^{l+1}}& \big[u_i^\lambda(s^l,a_1,a_2) \nonumber \\
& +\gamma V_{i,l+1}^\lambda(\pi,b^{l+1},s^{l+1})\big].\label{eq:belief_V}
\end{align}
}
The Q-function $Q_{i,l}^\lambda(\pi,b^l,s^l,a^l)$ can be defined similarly. We assume the policies and value functions are parameterized by $\theta$ and $\phi$ respectively with neural networks.

\subsection{Backward Induction}\label{sec:bi}

Standard RL approaches with self-play train policies in a top-down manner: it executes the learning policies from round $0$ to $L-1$ and only learns from the experiences at visited decision points. To find strategy profiles with decent performances from any decision point onward, we use backward induction and train the policies and calculate the value functions in the reverse order of rounds: we start by training $\pi_{i,L-1}$ for all agents and then calculate the corresponding value functions $V_{i,L-1}^\lambda$, and then train $\pi_{i,L-2}$ and so on. 

The benefit of using backward induction is two-fold. In the standard forward-fashioned approach, one needs to roll out the entire trajectory to estimate the accumulative reward for policy and value learning. In contrast, with backward induction, when training $\pi_{i,l}$, we have already obtained $V_{i,l+1}^\lambda$. Thus, 
we just need to roll out the policy for 1 round and directly estimate the expected accumulated value using $V_{i,l+1}^\lambda$ and Eq. (\ref{eq:belief_V}). Hence, we effectively reduce the original $L$-round game into $L$  1-round games, which makes the learning much easier. 
Another important benefit is that we can uniformly sample all possible combinations of state, belief and type at each round to ensure effective exploration. 
More specifically, in round $l$, we can sample a belief $b$ and then construct a new game by resetting the environment with a uniformly randomly sampled state and a type sampled from $b$. 
Implementation-wise, we assume access to an auxiliary function from the environment, called $\textit{sub\_reset}(l,b)$ that produces a new game as described. 
This function takes two parameters, a round $l$ and a belief $b$, as input and produces a new game by drawing a random state from the entire state space $\Omega$ with equal probability and a random type according to the belief distribution $b$.
This function is an algorithmic requirement for the environment, which is typically feasible in practice. For example, most RL environments provides a \textit{reset} function that generates a random starting state, so a simple code-level enhancement on this \textit{reset} function can make existing testbeds compatible with our algorithm.
We remark that even with such a minimal environment enhancement requirement, our framework does \emph{NOT} utilize the transition information. Hence, our method remains nearly model-free comparing to other methods that assume full access to the underlying environment transitions --- this is the assumption of most CFR-based algorithms. Furthermore, using customized reset function is not rare in the RL literature. For example, most automatic curriculum learning algorithms assumes a flexible reset function that can reset the state of an environment to a desired configuration.

\subsection{Policy Learning}\label{sec:policy}
Each time a game is produced by $\textit{sub\_reset}(l,b)$, we perform a 1-round learning with self-play to find the policies $\pi_{i,l}$. TISP allows different policy learning methods. Here we consider two popular choices, policy gradient and regret matching.


\subsubsection{Policy Gradient}
PG method directly takes a gradient step over the expected utility. For notational conciseness, we omit the super/subscripts of $i$, $l$, $\lambda$ in the following equations and use $s$, $b$ and $s'$, $b'$ to denote the state and belief at current round and next round. 

\begin{theorem}
In the belief-based representation, the policy gradient derives as follows:

{
\begin{align}
    \nabla_{\theta} V^{\lambda}(\pi, b, s)  =  \sum_{a \in \mathcal{A}}  \nabla_{\theta} & \left(\pi_\theta (a \vert b, s) Q^\lambda(\pi, b, s, a) \right) \label{eq:pg_update} \\
    = \mathbb{E}_{a, s'} \Big[ Q^\lambda(\pi, b, s, a) & \nabla_\theta \ln \pi_\theta(a \vert b, s) \nonumber\\
    & + \gamma \nabla_\theta b'\nabla_{b'}V^\lambda(\pi,b', s') \Big].\nonumber 
\end{align}
}
\end{theorem}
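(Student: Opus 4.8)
The plan is to derive the gradient directly from the one-step recursive definitions of $V^\lambda$ and $Q^\lambda$ together with the backward-induction structure of TISP. The starting point is the identity $V^\lambda(\pi, b, s) = \sum_{a} \pi_\theta(a \vert b, s)\, Q^\lambda(\pi, b, s, a)$, which follows by combining the definition of $V$ in Eq.~(\ref{eq:belief_V}) with that of $Q$: the expectation over $(a_1, a_2)$ in Eq.~(\ref{eq:belief_V}) is exactly the sum over joint actions weighted by the joint policy. Taking $\nabla_\theta$ of both sides and exchanging the finite sum over actions with the gradient yields the first equality in the statement.

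For the second equality I would apply the product rule to each summand, splitting $\nabla_\theta(\pi_\theta Q^\lambda)$ into a term in which the gradient hits the policy and a term in which it hits the $Q$-value. For the first, the standard score-function identity $\nabla_\theta \pi_\theta(a\vert b,s) = \pi_\theta(a\vert b,s)\,\nabla_\theta \ln \pi_\theta(a\vert b,s)$ converts the sum into the expectation $\mathbb{E}_{a}\!\left[Q^\lambda(\pi,b,s,a)\,\nabla_\theta \ln \pi_\theta(a\vert b,s)\right]$, which gives the first term on the right-hand side.

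The second term requires differentiating $Q^\lambda(\pi, b, s, a) = u^\lambda(s, a_1, a_2) + \gamma\,\mathbb{E}_{s'}\!\left[V^\lambda(\pi, b', s')\right]$. Since the immediate payoff $u^\lambda$ carries no dependence on $\theta$, only the continuation value survives: $\nabla_\theta Q^\lambda = \gamma\,\mathbb{E}_{s'}\!\left[\nabla_\theta V^\lambda(\pi, b', s')\right]$. Here is the crucial point, and the step I expect to be the main obstacle to state cleanly: in a standard (stationary) policy-gradient derivation the continuation value depends recursively on the same parameters $\theta$, producing an infinite unrolling. In TISP, however, backward induction means that when $\pi_{i,l}$ (parameterized by $\theta = \theta_l$) is being trained, the next-round value function is already fixed and parameterized by the separate, frozen parameters $\theta_{l+1}$. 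Consequently the only channel through which $\theta$ influences $V^\lambda(\pi, b', s')$ is the next belief $b'$, which is a differentiable function of $\theta$ through the Bayes update in Eq.~(\ref{eq:next_belief}) (well-defined as long as its denominator is nonzero). The chain rule then gives $\nabla_\theta V^\lambda(\pi, b', s') = \nabla_\theta b'\,\nabla_{b'} V^\lambda(\pi, b', s')$, with no term propagating through the future policy.

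Finally I would weight this contribution by $\pi_\theta(a\vert b,s)$ and note that $b'$ depends on the realized action $a_1$ through Eq.~(\ref{eq:next_belief}), so that the sum over $a$ combines with the expectation over $s'$ into a single joint expectation $\mathbb{E}_{a, s'}$; the score-function term, being independent of $s'$, also sits unchanged under this joint expectation. Adding the two contributions reproduces exactly the second line of the statement. The substantive content of the proof is entirely in isolating the belief-only dependence of the continuation value; the remaining manipulations are routine applications of the product rule, the log-derivative trick, and the chain rule, and require only the usual differentiability of $\pi_\theta$, $V^\lambda$, and the Bayes operator.
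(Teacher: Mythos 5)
Your proof is correct, but it takes a genuinely different route from the paper's. Your key move is to exploit TISP's backward-induction structure: when the round-$l$ policy is trained, the continuation value is parameterized by the already-frozen parameters $\theta_{l+1}$, so the only channel through which $\theta=\theta_l$ affects $Q^\lambda$ is the Bayes update $b'$; this terminates the recursion after one step and yields the pointwise equality $\nabla_\theta V^\lambda(\pi,b,s)=\mathbb{E}_{a,s'}[\cdots]$ exactly as the theorem states it. The paper's appendix instead mimics the classical policy-gradient-theorem derivation: it applies the product rule, splits $\nabla_\theta Q^\lambda$ into a belief-channel term \emph{plus} the recursive term $\gamma\sum_{s'}P(s'|s,a_1,a_2)\nabla_\theta V^\lambda(\pi,b',s')$ (which it does not discard), and then unrolls this recursion over the state--belief visitation distribution, ending with a proportionality statement for the overall objective, $\nabla_\theta J(\theta)\propto\mathbb{E}_{s,b,a_1,a_2}\big[Q^\lambda\nabla_\theta\ln\pi_\theta+\gamma\nabla_\theta \mathrm{Bayes}(b,\pi)\nabla_{b'}V^\lambda(\pi,b',s')\big]$. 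What each buys: the paper's unrolling does not rely on per-round frozen parameters, so it also covers the case of a single parameter vector shared across rounds, but strictly speaking it establishes a visitation-weighted proportionality for $J$ rather than the per-$(b,s)$ equality with expectation only over $(a,s')$ that the theorem claims; your argument is shorter, gives an exact equality, and is the reading under which the theorem is literally true and under which TISP actually operates (indeed, the recursive term does not vanish without the frozen-parameter observation, so your identification of that step as the crux is exactly right).
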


Comparing to the standard policy gradient theorem, we have an additional term in Eq.(\ref{eq:pg_update}) (the second term). Intuitively, when the belief space is introduced, the next belief $b'$ is a function of the current belief $b$ and the policy $\pi_{\theta}$ in the current round (Eq.(\ref{eq:next_belief})). Thus, the change in the current policy may influence future beliefs, resulting in the second term.
The full derivation can be found in Appendix~\ref{app-sec:pg}. We also show in the experiment section that when the second term is ignored, the learned policies can be substantially worse. We refer to this PG
variant of TISP framework as TISP-PG.

\subsubsection{Regret Matching}


Regret matching is another popular choice for imperfect information games. We take inspirations from Deep CFR~\cite{Brown2019DeepCR} and propose another variant of TISP, referred to as TISP-CFR. Specifically, for each training iteration $t$, let $R^t(s, a)$ denote the regret of action $a$ at state $s$, $\pi^t(a|s,b)$ denote the current policy, $Q^t(\pi^t,s,b,a)$ denote the Q-function and $V^t_\phi(\pi^t,s,b)$ denote the value function corresponding to $\pi^t$. Then we have  
\begin{eqnarray*}
	\pi^{t+1}(a|s,b)&=&\frac{(R^{t+1}(s, b, a))^+}{\sum_{a'}(R^{t+1}(s, b, a'))^+},
\end{eqnarray*}
where $R^{t+1}(s, b, a)=\sum_{\tau=1}^t Q^\tau(\pi^\tau, s, b, a)-V^\tau_\phi(\pi^\tau, s, b)$ and $(\cdot)^+:=\max(\cdot, 0)$. Since the policy can be directly computed from the value function, we only learn $V^t_\phi$ here.
Besides, most regret-based algorithms require known transition functions, which enables them to reset to any infomation-set node in the training. We use the outcome sampling method~\cite{lanctot2009monte}, which samples a batch of transitions to update the regret. This is similar to the value learning procedure in standard RL algorithms. This ensures our algorithm to be model-free. Although TISP-CFR does not require any gradient computation, it is in practice much slower than TISP-PG since it has to learn an entire value network in every single iteration. 

\begin{algorithm}[t]
\caption{Temporal-Induced Self-Play}
\label{alg:ti_framework}

\begin{algorithmic}[1]
\For {$l=L - 1, \dots, 0$}
    \For {$k=1, 2, \dots K$} \Comment{ run in parallel}
        \For {$t=1, \dots, T$}
            \State Initialize replay buffer $D=\{\}$ and $\pi^0$
            \For {$j=1, \dots, \textrm{batch size}$} \Comment{parallel}
            \State $s \gets sub\_reset(l, b_k)$;
            \State $a \gets  \pi^{t-1}_{\theta_{l,k}}(s;b_k)$;
            
            \State get next state $s'$ and utility $u$ from env;
            \State $D \gets D + (s,a,s',u)$;
            \EndFor
            \State Update ${V^t_{\phi_{l,k}}}$ and $\pi^t_{\theta_{l,k}}$  using $D$;
        \EndFor
        \State $V_{\phi_{l,k}}\gets V^{n}_{\phi_{l,k}}$, $\pi_{\theta_{l,k}}\gets \pi^n_{\theta_{l,k}}$
    \EndFor
\EndFor
\State \Return $\{\pi_{\theta_{l,k}}, V_{\phi_{l,k}}\}_{0\le l <L, 1\le k \le K}$;
\end{algorithmic}
\end{algorithm}

\subsection{Belief-Space Policy Approximation} \label{sec:approx}

A small change to the belief can drastically change the policy. When using a function approximator for the policy, this requires the network to be sensitive to the belief input. 
We empirically observe that a single neural network often fails to capture the desired belief-conditioned policy. Therefore, we use an ensemble-based approach to tackle this issue.


At each round, we sample $K$ belief points $\{ b_1, b_2, ..., b_K \}$ from the belief space $\Delta^{|\Lambda|}$. For each belief $b_k$, we use self-play to learn an accurate independent strategy $\pi_{\theta_{k}}(a|s;b_k)$ and value $V_{\phi_{k}}(\pi,s;b_k)$ over the state space but specifically conditioning on this particular belief input $b_k$.
When querying the policy and value for an arbitrary belief $b$ different from the sampled ones, we use a distance-based non-parametric method to approximate the target policy and value. Specifically, for any two belief $b_1$ and $b_2$, we define a distance metric $w(b_1,b_2)= \frac{1}{\max \left( \epsilon, \|b-b'\|^2 \right)} $ and then for the query belief $b$, we calculate its policy $\pi(a | b, s)$ and value $V(\pi, b, s)$ by
\begin{eqnarray}
    \pi(a | b, s) &=& \frac{\sum_{k=1}^K \pi_{\theta_k}(a |s; b_k)w(b, b_k)}{\sum_{k=1}^K w(b, b_k)}\label{eq:approx-pi} \\
    V(\pi, b, s)&=&\frac{\sum_{k=1}^K V_{\phi_k}(\pi, s; b_k) w(b, b_k)}{\sum_{k=1}^K w(b, b_k)}
\end{eqnarray}





We introduce a density parameter $d$ to ensure the sampled points are dense and has good coverage over the belief space. $d$ is defined as the farthest distance between any point in $\Delta^{|\Lambda|}$ and its nearest sampled points, or formally $d=\sup\{\min\{\|b, b_i\|\mid i=1,\dots,K\}\mid b\in \Delta^{|\Lambda|}\}$. A smaller $d$ indicates a denser sampled points set.

Note that the policy and value networks at different belief points at each round are completely independent and can be therefore trained in perfect parallel. So the overall training wall-clock time remains unchanged with policy ensembles. Additional discussions on belief sampling are in  Appx.~\ref{app-sec:belief-sampling}.

\subsection{Test-time Strategy}\label{sec:test-time}

\begin{algorithm}[tb]
\caption{Compute Test-Time Strategy}\label{alg:convert}

\begin{algorithmic}[1]
	\Function{GetStrategy}{$h^l, \pi_{\theta_1},\ldots,\pi_{\theta_L}$}
		\State $b^0\gets$ $p^0$
		\For{$j\gets 0,\dots, l-1$}
			\State update $b^{j+1}$ using $b^j$, $s^j$, $a^j$ and $\pi_{\theta_j}$   with Eq.(\ref{eq:next_belief})
		\EndFor
		\State \Return $\pi(a|b^{l},s^l)$ with Eq.(\ref{eq:approx-pi})
	\EndFunction
\end{algorithmic}
	
\end{algorithm}


Note that our algorithm requires computing the precise belief using Eq.~\ref{eq:next_belief}, which requires the known policy for player 1, which might not be feasible at test time when competing against an unknown opponent. 
Therefore, at test time, we update the belief according to the \emph{training policy} of player 1, regardless of its actual opponent policy. That is, even though the actions produced by the actual opponent can be completely different from the oracle strategies we learned from training, we still use the oracle policy from training to compute the belief. Note that it is possible that we obtain an infeasible belief, i.e., the opponent chooses an action with zero probability in the oracle strategy. In this case, we simply use a uniform belief instead. The procedure is summarized in Alg.~\ref{alg:convert}. We also theoretically prove in Thm.~\ref{thm:main} that in the zero-sum case, the strategy provided in Alg.~\ref{alg:convert} provides a bounded approximate PBE and further converges to a PBE with infinite policy learning iterations and sampled beliefs. The poof can be found in Appendix~\ref{app-sec:main-proof}.



\begin{theorem}\label{thm:main}
	  When the game is zero-sum and $\Omega$ is finite, the strategies produced by Alg.~\ref{alg:convert} is $\epsilon$-PBE, where $\epsilon=L(dU+c\cdot T^{-1/2})$, $d$ is the distance parameter in belief sampling, $U=L\cdot\big(\max_{i,s,a_1,a_2} u_i(s,a_1,a_2)-\min_{i,s,a_1,a_2} u_i(s,a_1,a_2)\big)$, $n$ is the number of iterations in policy learning and $c$ is a positive constant associated with the particular algorithm (TISP-PG or TISP-CFR, details in Appendix~\ref{app-sec:main-proof}). When $T\rightarrow \infty$ and $d\rightarrow 0$, the strategy becomes a PBE.
\end{theorem}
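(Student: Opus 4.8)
The plan is to reduce the global $\epsilon$-PBE condition to a uniform per-subgame optimality bound and then control that bound by backward induction over the $L$ rounds. First I would observe that, because the horizon is finite and the pair $(b^l, s^l)$ is a sufficient statistic for the continuation game, an $\epsilon$-PBE is obtained as soon as the learned profile has exploitability at most $\epsilon$ at \emph{every} triple $(l, b, s)$ with $b \in \Delta^{|\Lambda|}$ and $s \in \Omega$. In the zero-sum setting each such continuation game is a finite zero-sum game, so by the minimax theorem it has a well-defined game value $V^*_l(b,s)$; I would define the exploitability $e_l(b,s)$ as the combined best-response gain of the two players against the learned strategies and aim to show $\sup_{b,s} e_l(b,s) \le (L-l)(dU + cT^{-1/2})$, which at $l=0$ gives the claimed $\epsilon$.

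The key structural fact I would establish is Lipschitz continuity of the subgame value in the belief. For a fixed strategy profile the marginal value is linear, $V = \sum_{\lambda} b_\lambda V^\lambda$, and since all per-type values $V^\lambda$ lie in an interval of width at most $U$ while $b$ is a distribution, two beliefs within distance $d$ induce marginal values differing by at most $dU$; taking a max--min over strategies preserves this constant, so the game value $V^*_l(\cdot,s)$ is likewise $U$-Lipschitz. This justifies the non-parametric interpolation of Eq.~(\ref{eq:approx-pi}): answering a query at an arbitrary belief $b$ with the policies trained at the nearest sampled beliefs (within distance $d$ by definition of the density parameter) incurs an extra exploitability of at most $dU$. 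Here I would use that the guaranteed value $\min_{\pi_{-i}} V$ of a zero-sum strategy is concave in that strategy, so mixing component policies that are each approximately optimal for a nearby belief stays approximately optimal for $b$.

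Next I would analyze a single round in isolation. At each sampled belief $b_k$ and state $s$, the policy-learning subroutine solves the one-round zero-sum game whose continuation payoff is the already-computed $V_{l+1}$; after $T$ iterations it returns an $\eta_T$-equilibrium of that one-round game with $\eta_T \le cT^{-1/2}$ --- the standard average-regret bound for TISP-CFR and the corresponding convergence rate for TISP-PG, which fixes the constant $c$. Combining the optimization error $cT^{-1/2}$ at the sampled beliefs with the interpolation error $dU$ gives a per-round error of at most $dU + cT^{-1/2}$. I would then set up the backward recursion $\delta_l \le \delta_{l+1} + (dU + cT^{-1/2})$ on $\delta_l := \sup_{b,s}|\widehat V_l - V^*_l|$, using the zero-sum minimax value $V^*$ as the anchor so that an approximate solve of the one-round game against an approximately optimal continuation value yields an approximately optimal solve of the true subgame. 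With $\delta_L = 0$ this unrolls to $\delta_0 \le L(dU + cT^{-1/2})$, and a parallel recursion bounds $e_l$ by the same quantity. Since the bound is uniform over \emph{all} $(b,s)$, it applies to whatever belief Alg.~\ref{alg:convert} assigns a history --- including the uniform default used when the opponent plays an action of zero training-probability under Eq.~(\ref{eq:next_belief}) --- so the PBE requirement over all on- and off-path histories is met; letting $T \to \infty$ and $d \to 0$ drives $\epsilon \to 0$ and recovers an exact PBE.

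The main obstacle I anticipate is making the error-propagation step rigorous: I must show that the round errors \emph{add} rather than amplify across the backward induction, which requires controlling how the nonlinear Bayesian belief update of Eq.~(\ref{eq:next_belief}) interacts with the interpolated value, and establishing the $O(T^{-1/2})$ rate for the policy-gradient variant (for regret matching the bound is classical). The zero-sum assumption is exactly what makes this tractable, since it supplies a single minimax value to anchor the recursion and the concavity of the guaranteed value to handle mixing, thereby sidestepping the equilibrium-selection difficulties that would break additive error accumulation in general-sum games.
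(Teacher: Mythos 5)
Your overall architecture---a belief-Lipschitz argument contributing $dU$, a no-regret one-round solve contributing $cT^{-1/2}$, backward induction multiplying by $L$, and the zero-sum assumption to make approximate solutions compose---is the same as the paper's (which formalizes the rounds as subforest-induced games solved by an oracle at $d$-densely sampled priors). The genuine gap is in the error-propagation step, which you yourself flag as the main obstacle: your proposed resolution, anchoring the recursion on the minimax value via $\delta_l=\sup_{b,s}|\widehat V_l-V^*_l|$ and asserting that ``a parallel recursion bounds $e_l$ by the same quantity,'' does not deliver the theorem's linear-in-$L$ bound. Concretely, if the round-$l$ policy is only known to be an $\eta$-equilibrium of the one-round game built on continuation values within $\delta_{l+1}$ of $V^*_{l+1}$, then the exploitability (or maximin-guarantee) recursion picks up a cross-term, $e_l\le e_{l+1}+\eta+2\delta_{l+1}$, because perturbing a zero-sum game's payoffs by $\delta$ in sup norm can move exploitability by $2\delta$. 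Since by your own value recursion $\delta_{l+1}$ grows like $(L-l-1)(dU+cT^{-1/2})$, unrolling gives a bound of order $L^2(dU+cT^{-1/2})$, not $L(dU+cT^{-1/2})$. If instead you try to control $\|\widehat V_{l+1}-V(\hat\pi_{l+1})\|$ through the minimax anchor, you replace $\delta_{l+1}$ by $e_{l+1}$ and the recursion becomes multiplicative ($e_l\le 3e_{l+1}+\eta$), which is worse. Value error and exploitability are not interchangeable here, and an $\epsilon$-PBE is a statement about deviation gains, so this is not a cosmetic issue.

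The paper's proof avoids the cross-term by never comparing to $V^*$. In its per-round argument, the continuation payoffs reported to the regret minimizer are the learned subgame profile's \emph{own} equilibrium values (the oracle's $u_i^*$ at the exactly Bayes-updated belief), and the quantity propagated by induction is exploitability \emph{against the learned profile}. Any full-game deviation then decomposes exactly into a first-round deviation, bounded by the no-regret guarantee $\epsilon_{\text{min}}(T)$, plus expected deviations inside the reached subforests, bounded by the inductive hypothesis; the belief interpolation at sampled priors adds $dU$ (via the linearity-in-prior lemma and the fact that mixtures of $\epsilon$-NEs remain $\epsilon$-NEs in zero-sum games). This yields the purely additive recursion $\epsilon_r\le\epsilon_{r-1}+\epsilon_{\text{min}}(T)+dU$ with no $\delta$ term at all, hence the linear bound. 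Your framework can be repaired the same way: drop the $V^*$ anchor, take the trained value functions to estimate the values of the learned continuation profile, and run $e_l\le e_{l+1}+\eta_T+dU$ directly from the best-response decomposition. Your remaining ingredients---the $dU$ Lipschitz/mixing argument, the uniform-over-$(b,s)$ handling of off-path and zero-probability-action beliefs, and the need for an $O(T^{-1/2})$ rate for the PG variant (which the paper obtains by invoking online-gradient regret bounds)---do match the paper's treatment.
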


We remark that TISP is nearly-model-free and does not utilize the transition probabilities, which ensures its adaptability.



\section{Experiment}
We test our algorithms in three sets of games. While very few previous works in RL focus on equilibrium refinement, we compare our algorithm with self-play PG with RNN-based policy (referred to as RNN) and provide ablation studies for the TISP-PG algorithm: \emph{BPG} uses only belief-based policy without backward induction or belief-space approximation; \emph{BI} adopt backward induction and belief-based policy but does not use belief-space approximation. 
Full experiment details can be found in Appx.~\ref{app-sec:expr}. 


\subsection{Finitely Repeated Security Game}

\subsubsection{Game Setting}
We consider a finitely repeated \textit{simultaneous-move} security game, as discussed in \cite{nguyen2019deception}. Specifically, this is an extension of a one-round security game by repeating it for $L$ rounds. Each round's utility function can be seen as a special form of matrix game and remains the same across rounds. In each round, the attacker can choose to attack one position from all $A$ positions, and the defender can choose to defend one. The attacker succeeds if the target is not defended. The attacker will get a reward if it successfully attacks the target and a penalty if it fails. Correspondingly, the defender gets a penalty if it fails to defend a place and a reward otherwise. In the zero-sum setting, the payoff of the defender is the negative of the attacker's. We also adopt a general-sum setting described in \cite{nguyen2019deception} where the defender's payoff is only related to the action it chooses, regardless of the attacker's type. 



\subsubsection{Evaluation}
We evaluate our solution by calculating the minimum $\epsilon$ so that our solution is an $\epsilon$-PBE. We show the average result of 5 different game instances. 

\begin{table}[t]
\begin{subtable}[h]{0.48\textwidth}
\small
\centering
\begin{tabular}{l|llll}
{}    & TISP-PG & RNN & BPG & BI  \\ \hline
Mean $\epsilon$&   \textbf{0.881}  &  15.18     &  101.2  &  27.51  \\ \hline
Worst $\epsilon$  &   \textbf{1.220} &   31.81    &  111.8  & 42.54
\end{tabular}
\vspace{-5pt}
\caption{Zero-sum result for model-free methods, with $|\Lambda|=2$.}
\label{table:zero_sum_sec_game}
\end{subtable}
\begin{subtable}[h]{0.48\textwidth}
\small
\centering
\begin{tabular}{l|llll}
{}    & TISP-PG & RNN & BPG & BI  \\ \hline
Mean $\epsilon$&   \textbf{0.892}  &  34.62     &  89.21  &  83.00  \\ \hline
Worst $\epsilon$  &   \textbf{1.120} &   57.14    &  182.1  & 111.9  
\end{tabular}
\vspace{-5pt}
\caption{General-sum result for model-free methods, with $|\Lambda|=2$.}
\label{table:general_sum_sec_game}
\end{subtable}
\begin{subtable}[h]{0.48\textwidth}
\small
\centering
\begin{tabular}{l|ll|ll}
{} & \multicolumn{2}{c|}{Zero-sum} & \multicolumn{2}{c}{General-sum} \\ \cline{2-5}
{}    &  TISP-PG & TISP-CFR  &  TISP-PG & TISP-CFR \\ \hline
Mean $\epsilon$ &   \textbf{0.446}  &   0.474 &  \textbf{0.608}  &   0.625 \\ \hline
Worst $\epsilon$    &   \textbf{1.041}  & 1.186 & \textbf{1.855}  & 1.985
\end{tabular}
\vspace{-5pt}
\caption{Result for known model variants, with $|\Lambda|=2$.}
\label{table:known_model_sec_game}
\end{subtable}
\begin{subtable}[h]{0.45\textwidth}
\small
\centering
\begin{tabular}{l|llll}
{}    & TISP-PG & RNN & BPG & BI \\ \hline
Mean $\epsilon$&   \textbf{1.888}  &   18.20    &  79.74  & 40.75   \\ \hline
Worst $\epsilon$  &   \textbf{3.008} &    28.15   &  97.67  & 49.74
\end{tabular}
\caption{Zero-sum result, with $|\Lambda|=3$.}
\label{table:3types_sec_game}
\end{subtable}
\vspace{-7pt}
\caption{The result for finitely repeated security game. 
The less the number, the better the solution is. These results are evaluated with $L=10$, $|A|=2$, and uniform prior distribution.}
\label{table:sec_game}
\end{table}


\begin{table}[t]
\begin{subtable}[h]{0.45\textwidth}
\small
\centering
\begin{tabular}{l|lllll}
L        & 2 & 4 & 6 & 8 & 10 \\ \hline
MP       & $\approx10^{-8}$  & $\approx10^{-6}$  &  $\approx10^{-5}$ & N/A  &  N/A  \\
TISP-PG  & 0.053  & 0.112  & 0.211  & 0.329  & 0.473   \\
TISP-CFR & 0.008 & 0.065  & 0.190  &  0.331 &  0.499
\end{tabular}
\caption{$|A|=2$}
\end{subtable}
\begin{subtable}[h]{0.45\textwidth}
\small
\centering
\begin{tabular}{l|lllll}
L        & 2 & 4 & 6 & 8 & 10 \\ \hline
MP       &  $\approx10^{-6}$ &  $\approx10^{-6}$ & $\approx10^{-3}$  & N/A  &  N/A  \\
TISP-PG  &  0.120 & 0.232  & 0.408  & 0.599  & 0.842   \\
TISP-CFR &  0.002 &  0.049 &  0.285 & 0.525  &  0.847 
\end{tabular}
\caption{$|A|=5$}
\end{subtable}

\caption{Comparing mathematical-programming 
and our methods, i.e., TISP-PG and TISP-CFR, with known model. These results are averaged over 21 starting prior distributions of the attacker ($[0.00, 1.00], [0.05, 0.95],\dots, [1.00, 0.00]$).}
\label{tab:sec_game_math}
\end{table}


\subsubsection{Results}
We first experiment with the zero-sum setting where we have proved our model can converge to an $\epsilon$-PBE.
The comparison are shown in Table~\ref{table:zero_sum_sec_game},\ref{table:known_model_sec_game}. We use two known model variants, TISP-PG and TISP-CFR, and a model-free version of TISP-PG in this comparison. TISP-PG achieves the best results, while TISP-CFR also has comparable performances. 
We note that simply using an RNN or using belief-space policy performs only slightly better than a random policy. 

Then we conduct experiments in general-sum games with results shown in Table~\ref{table:general_sum_sec_game},\ref{table:known_model_sec_game}. We empirically observe that the derived solution has comparable quality with the zero-sum setting. We also compare our methods with the Mathematical-Programming-based method (MP) in \cite{nguyen2019deception}, which requires full access to the game transition. The results are shown in Table~\ref{tab:sec_game_math}. Although when $L$ is small, the MP solution achieves superior accuracy, it quickly runs out of memory (marked as ``N/A'') since its time and memory requirement grows at an exponential rate w.r.t. $L$. Again, our TISP variants perform the best among all learning-based methods. We remark that despite of the performance gap between our approach and the MP methods, the error on those games unsolvable for MP methods is merely $~0.1\%$ comparing to the total utility.

In our experiments, we do not observe orders of magnitudes difference in running time between our method and baselines: TISP-PG and TISP-CFR in the Tagging game uses 20 hours with 10M samples in total even with particle-based approximation (200k samples per belief point) while RNN and BPG in the Tagging game utilizes roughly 7 hours with 2M total samples for convergence. 

Regarding the scalability on the number of types, as the number of types increases, the intrinsic learning difficulty increases. This is a challenge faced by all the methods. The primary contribution of this paper is a new learning-based framework for PBNE. While we primarily focus on the case of 2 types, our approach generalizes to more types naturally with an additional experiment conducted for 3 types in Table.~\ref{table:3types_sec_game}. Advanced sampling techniques can potentially be utilized for more types, which we leave for future work. 

\subsection{Exposing Game}

\subsubsection{Game Setting} We also present a two-step matrix game, which we call \textit{Exposing}. In this game, player 2 aims to guess the correct type of player 1 in the second round. There are two actions available for player 1 and three actions available for player 2 in each round. Specifically, the three actions for player 2 means guessing player 1 is \textit{type 1}, \textit{type 2} or not guessing at all. The reward for a correct and wrong guess is $10$ and $-20$ respectively.  The reward for not guessing is $0$. Player 1 receives a positive $5$ reward when the player 2 chooses to guess in the second round, regardless of which type player 2 guesses. In this game, player 1 has the incentive to expose its type to encourage player 2 to guess in the second round. We further add a reward of $1$ for player 1 choosing action 1 in the first round regardless of its type to give player 1 an incentive to not exposing its type. With this reward, a short-sighted player 1 may pursue the $1$ reward and forgo the $5$ reward in the second round. The payoff matrices for this game are in Appx.~\ref{app-sec:expr}.

The equilibrium strategy for player 2 in the second round w.r.t. different types of player 1 is:

\begin{equation*}
    \text{strategy}:=\left\{ 
    \begin{array}{cc}
        \text{guessing type 1} & \text{if }\Pr[1|h]>\frac{1}{3} \\
        \text{guessing type 2} & \text{if }\Pr[2|h]>\frac{1}{3} \\
        \text{not guessing} & \text{else}
    \end{array}
    \right.
\end{equation*}

\begin{figure}[t]
    \centering
    \subfloat[Ground truth]{\includegraphics[width=.14\textwidth]{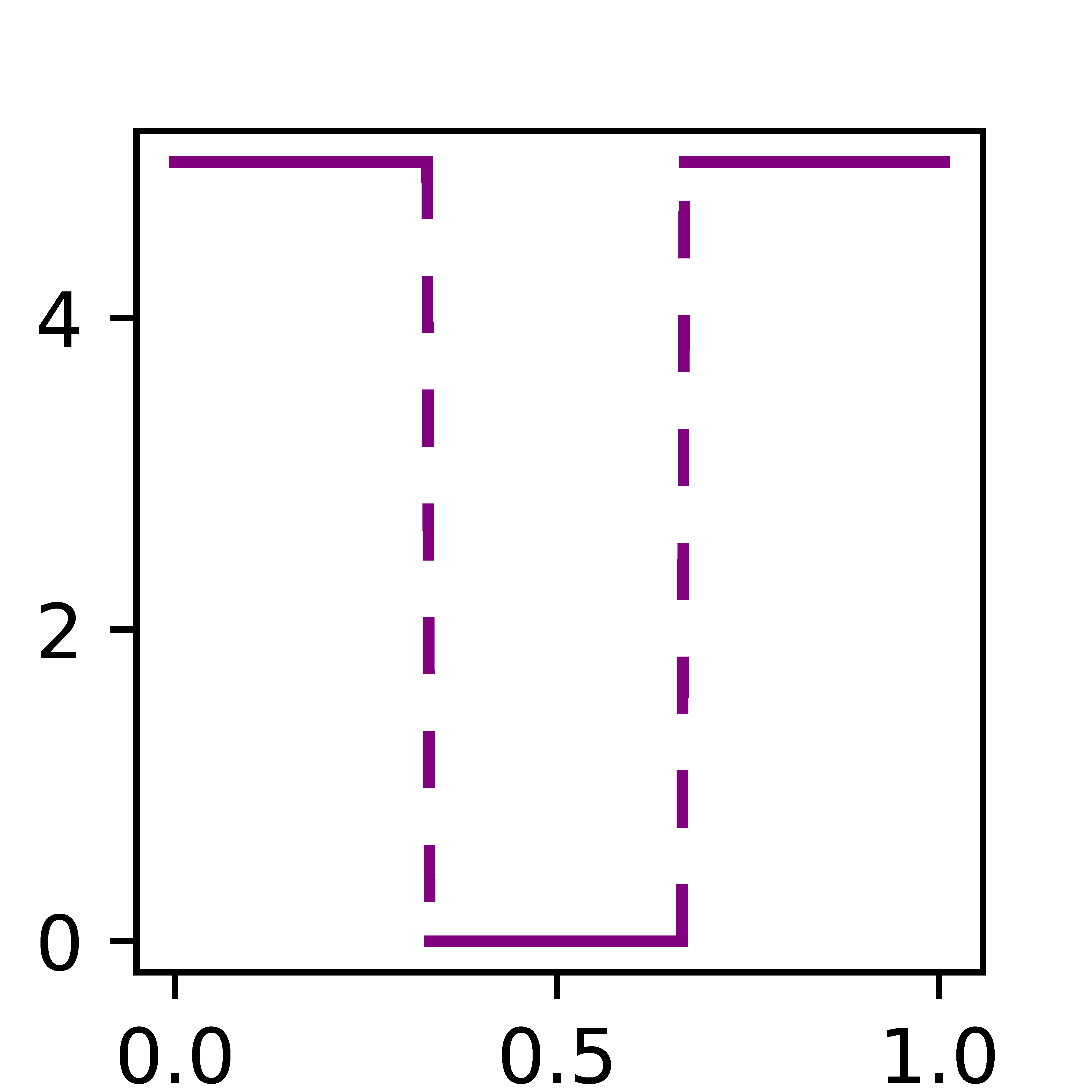} \label{fig:exposing-truth}}
    \qquad
    \subfloat[Approximation]{\includegraphics[width=.14\textwidth]{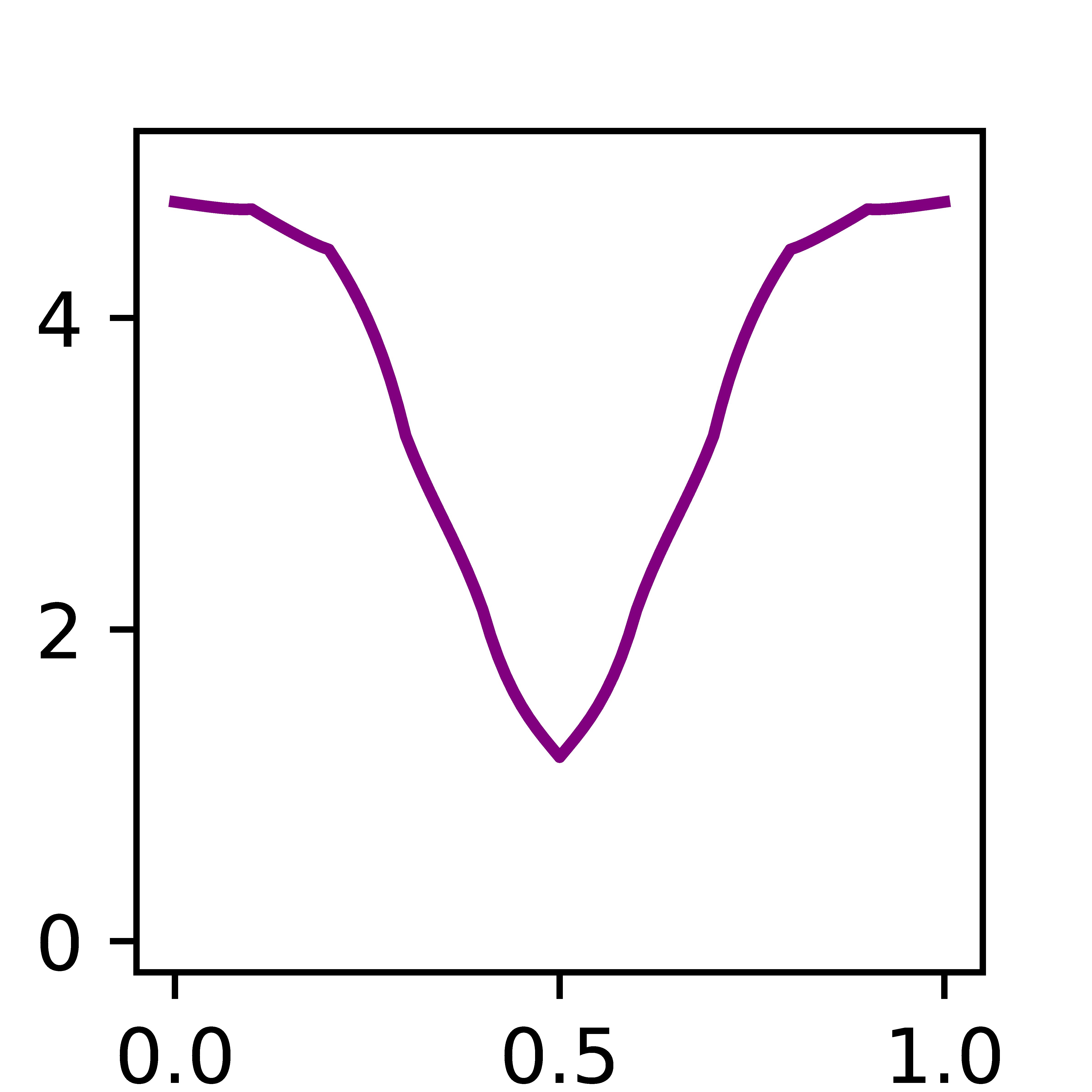} \label{fig:exposing-approx}}
    \caption{Ground truth and approximated value of player 1 in the second round of Exposing game. The $x-$axis corresponds to $\Pr[1|h]$. The $y-$axis corresponds to the equilibrium value. }
    \label{fig:exposing-value}
\end{figure}

In this game, the equilibrium values in the second round for both types of player 1 are highly discontinuous with regard to player 2's belief, as shown in Fig.~\ref{fig:exposing-truth}, which makes the belief-space gradient term in Eq.~\ref{eq:pg_update} ineffective. However, the approximation introduced in Sec.~\ref{sec:approx} can serve as a soft representation of the true value function, as shown in Fig.~\ref{fig:exposing-approx}. This soft representation provides an approximated gradient in the belief space, which allows for belief-space exploration. We will show 
later that this belief-space exploration cannot be achieved otherwise.


\begin{table}[t]
\small
\begin{tabular}{l|l|lll}
 & P1's type      & Action 1 & Action 2 & Reward \\ \hline
\multirow{2}{*}{TISP-PG}      & type 1  &   0.985& 0.015 & 5.985\\  \cline{2-5} 
   & type 2 & 0.258& 0.742 & 5.258 \\ \hline
\multirow{2}{*}{TISP-CFR}       & type 1  &  1.000& 0.000 & 1.000  \\ \cline{2-5}
    & type 2 & 1.000& 0.000 & 1.000  \\ \hline
 \multirow{2}{*}{TISP-PG$^-$}       & type 1  & 0.969 &  0.031 & 0.969  \\ \cline{2-5} 
  & type 2 & 0.969  & 0.031 & 0.969   \\ \hline
 \multirow{2}{*}{Optimal}       & type 1  & 1.000 &  0.000 & 6.000 \\ \cline{2-5} 
  & type 2 & 0.333  & 0.667 & 5.333
\end{tabular}
\caption{Detailed first round policy in Exposing game. TISP-PG is the only algorithm that separates the two player-1 types' strategies and yields a result very close to the optimal solution.\protect\footnotemark}
\label{table:exposing-policy}
\end{table}

\footnotetext{The optimal solution refers to the Pareto-optimal PBE in this game. Note that there are two symmetric optimal solutions. We choose to only show one here for easy comparison and simplicity.}

\subsubsection{Results} We compare the training result between TISP-PG and TISP-CFR to exhibit the effectiveness of our non-parametric approximation. We further add an ablation study that removes the belief-space gradient term in TISP-PG, which we call TISP-PG$^-$. The results are shown in Table~\ref{table:exposing-policy}. We can see that TISP-PG is the only algorithm that successfully escapes from the basin area in Fig.~\ref{fig:exposing-value} as it is the only algorithm that is capable of doing belief-space exploration. We also show the obtained policies  Table~\ref{table:exposing-policy}. The training curves can be found in Appx.~\ref{app-sec:expr}. Note that the policies trained from TISP-CFR and TISP-PG$^-$ are also close to a PBE where player 1 takes action 1 regardless of its type in the first round, and player 2 chooses not to guess in the second round, although it is not Pareto-optimal.


\subsection{Tagging Game}
\begin{figure}[t]
    \centering
    \includegraphics[width=0.8\linewidth]{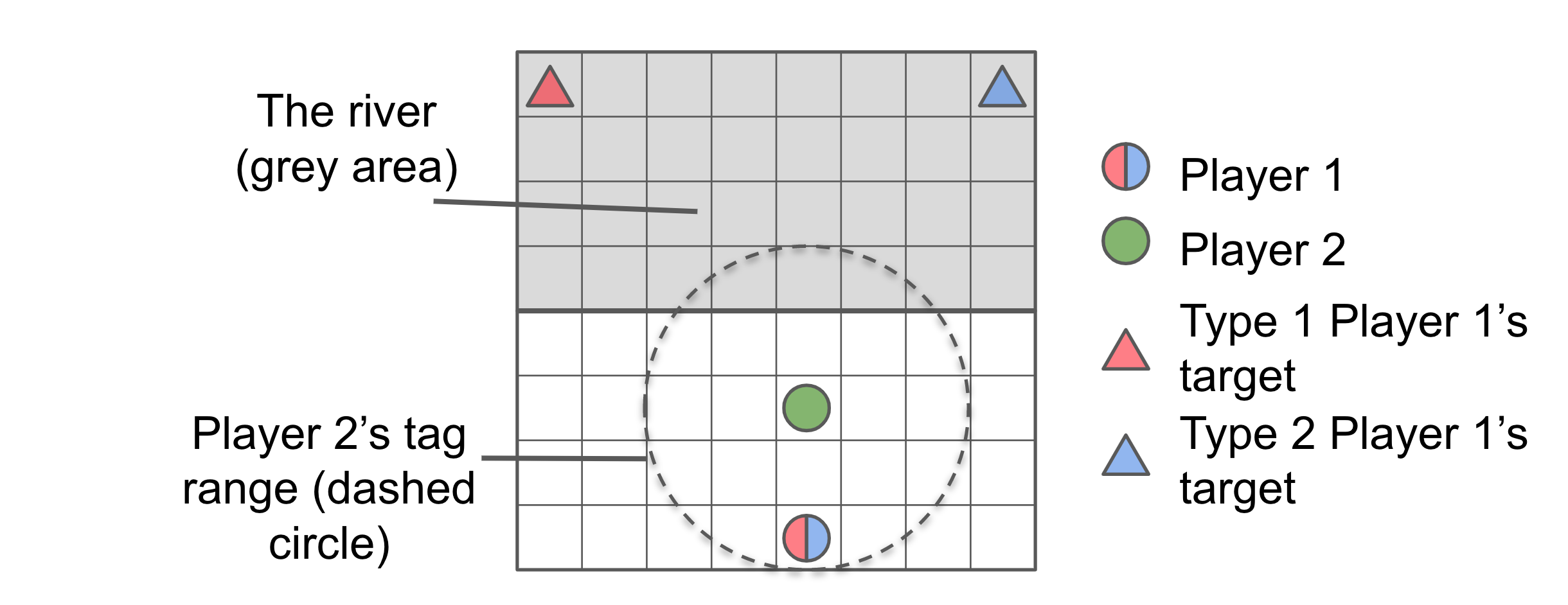}
    \caption{An illustration of the Tagging game.}
    \label{fig:tagging}
\end{figure}
\subsubsection{Game Setting} We test our method in a gridworld game \emph{Tagging}, as illustrated in Fig.~\ref{fig:tagging}. The game is inspired by  \cite{shen2019robust}. Specifically, the game is on an $8 \times 8$ square, and player 1 has two types, i.e., \textit{ally} and \textit{enemy}, and each type corresponds to a  unique target place. Each player will receive distance-based reward  to encourage it to move towards its target place. 
There is a river in the top half part of the grids which Player 2 cannot enter. Player 1 starts from the bottom middle of the map and player 2 starts from a random position under the river. Both players can choose to move in one of the four directions, [up(U), down(D), left(L), right(R)], by one cell. Player 2 has an additional action, \textit{tag}, to tag player 1 as the \textit{enemy} type. The tag action is only available when player 1 has not entered the river and the euclidean distance between the two players is less than $2.5$. The attackers get higher rewards for getting closer to their type-specified target, and the defenders get higher rewards for getting closer to the attacker. Moreover, if the defender chooses to tag, it will get a reward of $10$ if the attacker is of the \textit{enemy} type and a reward of $-20$ if the attacker is of the \textit{ally} type, while the attacker will get a reward of $-10$ for being tagged no matter what its type is. More detail is provided in Appx.~\ref{app-sec:expr}.

Based on the game's rule, an enemy-typed player 1 is likely to go to its own target immediately to get a higher reward. 
However, such a strategy reveals its type straight away, which can be punished by being tagged. A more clever strategy of an enemy-typed player 1 is to mimic the behavior of the ally-typed player 1 in most situations and never let player-2's belief of enemy type  be larger than $\frac{2}{3}$, so that player 2 has no incentive to tag it. The ally-typed player 1 may simply go up in most states in order to get closer to its target for a reward while player 2 should learn to tag player 1 if 
its belief of  enemy type is high enough and try to move closer to player 1 in other cases. 

\subsubsection{Evaluation}
Although it is computationally intractable to calculate the exact exploitability in this gridworld game, we examine the results following \cite{gray2020human} by evaluating the performances in induced games. We choose 256 induced games among all the induced games after the second round and check their exploitability. We get the best response in each induced game in two ways: in the first round of each game, we enumerate all the possible actions and manually choose the best action. We also train a new BPG player-2 from scratch. This new BPG-based player is trained in a single-agent-like environment and does not need to consider the change in the player-1 policy. We check how much reward agents can get after the learning process finishes. A high reward from player 2 would indicate that player 1 fail to hide its type.  We also provide the detailed strategies from different baselines for the first round of the game for additional insight.

\begin{table}[t]
\small
\begin{tabular}{l|l|llll}
 & P1's Type      & U & D & R & L  \\ \hline
TISP-PG      & Ally  &   0.839& 0.001& 0.001& 0.159    \\ \cline{2-6} 
  &  Enemy & 0.932& 0.001& 0.001& 0.066      \\ \hline
RNN & Ally  &  0.248& 0.237& 0.238& 0.274       \\ \cline{2-6}
    & Enemy & 0.599& 0.067& 0.077& 0.255       \\ \hline
BPG    & Ally  & 0.000  &  0.000 & 0.000  & 1.000      \\ \cline{2-6} 
    & Enemy & 1.000  & 0.000  & 0.000  & 0.000   \\ \hline
TISP-CFR & Ally  & 0.000  & 0.000  & 0.000  &  1.000    \\ \cline{2-6} 
  & Enemy & 1.000  & 0.000   & 0.000 & 0.000    \\ 
\end{tabular}
\caption{The policy at one of the starting states in Tagging game, where player 2 is two cells above the fixed starting point of player 1. 
}
\label{table:tagging_policy}
\end{table}


\begin{table}[t]
\small
\centering
\begin{tabular}{l|llll}
    & TISP-PG & RNN & BPG & TISP-CFR\\ \hline
P2 reward & -1.90 &   -1.67   &  -0.98& -1.82\\
P1 reward (ally) & -2.55 & -2.87 & -3.26&-3.17\\
P1 reward (enemy) & -2.41 & -2.71 & -9.29&-4.49\\
\end{tabular}
\caption{The average exploitability result of 256 induced games in the Tagging game. The lower player 2's reward, the better the algorithm.}
\label{table:tagging_exploitability}
\end{table}

\subsubsection{Results}
We show the derived policy in the very first step in Table~\ref{table:tagging_policy}. The policy learned by our method successfully  keeps the belief to be less than $\frac{1}{3}$, and keeps a large probability of going to the target of each type. The RNN policy shows no preference between different actions, resulting in not being tagged but also not getting a larger reward for getting closer to the target. The BPG policy simply goes straight towards the target and is therefore punished for being tagged. 
The exploitability results are shown in Table~\ref{table:tagging_exploitability}. 
From the training reward achieved by the new exploiter player 2, TISP-PG performs the best among all baselines and TISP-CFR also produces a robust player-1 policy. We remark that relative performances of different methods in the gridworld game are consistent with what we have previously observed in the finitely repeated security games, which further validates the effectiveness of our approach. 

\section{Discussion and Conclusion}
\label{sec:conclusion}
We proposed TISP, an RL-based framework to find strategies with a decent performance from any decision point onward. We provided theoretical justification and empirically demonstrated its effectiveness. Our algorithms can be easily extended to a two-sided stochastic Bayesian game. The TISP framework still applies, and the only major modification needed is to add a for-loop to sample belief points for player 2. This extension will cost more computing resources, but the networks can be trained fully in parallel. The full version of the extended algorithm is in Appendix~\ref{app-sec:extend}.

\section*{Acknowledgements}
We would like to thank Ryan Shi for some help in writing the early workshop version of this paper.
Co-author Fang is supported in part by NSF grant IIS- 1850477, a research grant from Lockheed Martin, and by the U.S. Army Combat Capabilities Development Command Army Research Laboratory Cooperative Agreement Number W911NF-13-2-0045 (ARL Cyber Security CRA). The views and conclusions contained in this document are those of the authors and should not be interpreted as representing the official policies, either expressed or implied, of the funding agencies.


\bibliographystyle{named}
\bibliography{ijcai21-multiauthor}

\newpage
\appendix

\section{Discussion on Belief Sampling}\label{app-sec:belief-sampling}
Our algorithm generally requires the sampled belief points guarantee that any possible belief has a sampled one less than $d$ distance from it, where $d$ is a hyperparameter. In one-dimension belief space (for two types), the most efficient way is to draw these beliefs with equal distance. 
This leads to a result that assures not be too far from any point that may cause policy change. With more sampled beliefs, the belief-space approximation will be more precise, which also requires more training resources accordingly.
An alternative solution is adaptive sampling based on the quality of the obtained policies, which,  however, requires sequential execution at different belief points at the cost of restraining the level of parallelism. The complete training framework is summarized in Algo.~\ref{app-alg:ti_framework}.

\section{Implementation Details}\label{app-sec:impl}
\subsubsection*{Sampling New Games in Backward Induction}
Implementation-wise, we assume access to an auxiliary function from the environment, which we called $\textit{sub\_reset}(l,b)$. This function takes two parameters, a round $l$ and a belief $b$, as input and produces a new game by drawing a random state from the entire state space $\Omega$ with equal probability and a random type according to the belief distribution $b$.
This function is an algorithmic requirement for the environment, which is typically feasible in practice. For example, most RL environments provides a \textit{reset} function that generates a random starting state, so a simple code-level enhancement on this \textit{reset} function can make existing testbeds compatible with our algorithm.
We remark that even with such a minimal environment enhancement requirement, our framework does \emph{NOT} utilize the transition information. Hence, our method remains nearly model-free comparing to other methods that assume full access to the underlying environment transitions --- this is the assumption of most CFR-based algorithms. 

\subsubsection*{RNN}
We use Gated Recurrent Unit networks~\cite{Cho2014LearningPR} to encode the state-action history and perform a top-down level self-play.

\subsubsection*{TISP-PG}

\begin{algorithm}[h]
\caption{TISP-PG}
\label{app-alg:tipg}

\begin{algorithmic}[1]
\Function{Training}{}
\For {$l=L, L - 1, \dots, 0$}
    \For {$k=1, 2, \dots K$} \Comment{ This loop can run in fully parallel.}
        
        \State Initialize the supervise set $D=\{\}$
        \For {$t=1, 2, \dots, T$} 
            \State $states \gets sub\_reset(l, b_k)$;
            
            \State $acts \gets Sampled\ base\ on\ \theta_{1,l, k}, \theta_{2, l, k}, states$;
            
            \State $s', rews, done \gets env.step(acts)$;
            
            \If {not done} 
                $rewards \gets rews + \gamma V_{\phi_{l+1}}(s')$;
            \EndIf
            
            \State $\theta_{1, l, k}, \theta_{2, l, k} \gets $ PGUpdate(states, acts, rews);
        \EndFor
        
        \State Initialize the supervise set $D=\{\}$
        
        \For {$t=1, 2, \dots, T$}
            \State $states \gets sub\_reset(l, b_k)$;
            \State $acts \gets Sampled \ base \  on \  \theta_{1, l, k}, \theta_{2, l, k}, states$;
            \State$s', rews, done \gets env.step(acts)$;
            
            \If {not done} 
                \State $rews \gets rews + \gamma V_{\phi_{l+1}}(s')$;
            \EndIf
            \State $D \leftarrow D + (states, rews)$;
        \EndFor
        \State $\phi_{1, l, k}, \phi_{2, l, k} \gets SuperviseTraining(D)$;
    \EndFor
\EndFor
\State \Return All $L \times K$ groups $\theta_1,  \theta_2, \phi_1, \phi_2$;
\EndFunction
\end{algorithmic}
\end{algorithm}

We implement our method TISP-PG as shown above. Specifically, there are two ways to implement the attacker, for we can either use separate networks for each type or use one network for all types and take the type simply as additional dimension in the input. In this paper, We use separate network for each type.





\subsubsection*{TISP-CFR}

The pseudo-code of our TISP-CFR is shown here:

\begin{algorithm}[h]
\caption{TISP-CFR}
\label{app-alg:ticfr}

\begin{algorithmic}[1]
\Function{Training}{}
\For {$l=L, L - 1, \dots, 0$}
    \For {$k=1, 2, \dots K$} \Comment{ This loop can run in fully parallel.}
        
        \State Initialize the supervise set $D=\{\}$
        \For {$t=1, 2, \dots, T$}
            \State $s_1, s_2 \gets sub\_reset(l, b_k)$;
            \State $acts \sim  (\pi_{1, l, k}(s_1), \pi_{2, l, k}(s_2) )$;
            
            \State $s', rewards, done \gets env.step(acts)$;
            
            \If {not done}
                \State $rewards \gets rewards + \gamma V_{\phi_{l+1}}(s')$;
            \EndIf
            \State $D \gets D + (states, rewards)$;
        \EndFor
        \State Update ${V_\phi}_{1, l, k}, {V_\phi}_{2, l, k}$  using D;
        \State Calculate $\pi_{1, l, k}, \pi_{2, l, k}$ using regret matching;
    \EndFor
\EndFor
\State \Return $All\ L \times K \ groups\ V \ and\ \pi$;
\EndFunction
\end{algorithmic}
\end{algorithm}

\section{Extension algorithm to two-sided Bayesian game}\label{app-sec:extend}

\begin{algorithm}[h]
\caption{Temporal-induced Self-Play for two-sided Stochastic Bayesian games}
\label{app-alg:ti_framework}

\begin{algorithmic}[1]
\Function{Training}{}
\For {$l=L, L - 1, \dots, 0$}
    \For {$k_1=1, 2, \dots K_1$} 
        \For {$k_2=1, 2, \dots K_2$} 
    
        \Comment{ This loop can run in fully parallel.}
        
        \State Initialize the supervise set $D=\{\}$
        \For {$t=1, 2, \dots, T$}
            \State $s_1, s_2 \gets sub\_reset(l, b_{k_1}, b_{k_2})$;
            \State $acts \sim  (\pi_{1, l, k_1,k_2}(s_1), \pi_{2, l, k_1,k_2}(s_2) )$;
            
            \State $s', rews, done \gets env.step(acts)$;
            
            \If {not done}
                \State $rews \gets rews + \gamma V_{\phi_{l+1}}(s')$;
            \EndIf
            \State $D \gets D + (states, rews)$;
        \EndFor
        \State Update ${V_\phi}_{1, l, k_1,k_2}, {V_\phi}_{2, l, k_1,k_2}$  using D;
        \State Update $\pi_{1, l, k_1,k_2}, \pi_{2, l, k_1,k_2}$; 
        \EndFor
    \EndFor
\EndFor
\State \Return $All\ L \times K \ groups\ V \ and\ \pi$;
\EndFunction
\end{algorithmic}
\end{algorithm}

Our algorithm can be extended to two-sided Stochastic Bayesian game, with the learning algorithm shown above. Specifically, we now do belief approximation in both side of players and calculate all the policies. This means all the equations in the paper, e.g., Eq. 4, 5, the belief vector $b$ should now be $b_1, b_2$ which are the belief points in both side. In other word, if you see the belief vector $b$ to carry the belief on both the type of both players, the equations can stay with the same.

\section{Experiment Details}~\label{app-sec:expr}

\subsection{Computing Infrastructure}

All the experiments on Security game and Exposing game are conducted on a laptop with 12 core Intel(R) Core(TM) i7-9750H CPU @ 2.60GHz, 16 Gigabytes of RAM and Ubuntu 20.04.1 LTS. All the experiments on Tagging game are conducted on a server with two 64 core AMD EPYC 7742 64-Core Processor @ 3.40GHz, 64 Gigabytes of RAM and Linux for Server 4.15.0.

\begin{table}[t]
    \centering
    \caption{Comparing mathematical-programming 
and our methods, i.e., TISP-PG and TISP-CFR, with known model and $|A| =2,|\Lambda|=2$. These results are averaged over 21 starting prior distributions of the attacker ($[0.00, 1.00], [0.05, 0.95],\dots, [1.00, 0.00]$).}
    \begin{tabular}{c|ccc}
$L$ & MP & TISP-PG & TISP-CFR  \\
\hline
1 & $<10^{-8}$ & 0.034 & 0.003\\
2 & $\approx 10^{-8}$ &  0.053 &0.008 \\
3 & $\approx 10^{-7}$ & 0.083 &0.030 \\
4 & $\approx 10^{-6}$ & 0.112 &0.065 \\
5 & $\approx 10^{-5}$ & 0.162 &0.117 \\
6 & $\approx 10^{-5}$ & 0.211 & 0.190\\
7 & N/A & 0.267& 0.251\\
8 & N/A & 0.329&0.331 \\
 9 & N/A & 0.448&0.459 \\
10 & N/A & 0.473& 0.499
\end{tabular}
    
    \label{tab:sec_game_all}
\end{table}

\subsection{Security Game}

We have also conducted the experiments in an easier setting of $|A|=2$, the result is shown in Table.~\ref{tab:sec_game_all}.

\subsection{Exposing Game}

\begin{figure}[t]
    \centering
    \captionsetup[subfigure]{labelformat=empty}
    \subfloat[Round 1]{
    \captionsetup[subfigure]{labelformat=empty}
    \subfloat[Type 1]{
        \begin{tabular}{|c|c|c|}
            \hline
            $(1, 0)$ & $(1, 0)$ & $(1, 0)$ \\
            \hline
            $(0, 0)$ & $(0, 0)$ & $(0, 0)$ \\
            \hline
        \end{tabular}
    }
    \qquad
    \subfloat[Type 2]{
        \begin{tabular}{|c|c|c|}
            \hline
            $(1, 0)$ & $(1, 0)$ & $(1, 0)$ \\
            \hline
            $(0, 0)$ & $(0, 0)$ & $(0, 0)$ \\
            \hline
        \end{tabular}
    }
    }
    \\
    \resizebox{0.99\columnwidth}{!}{
    \subfloat[Round 2]{
    \captionsetup[subfigure]{labelformat=empty}
    \subfloat[Type 1]{
        \begin{tabular}{|c|c|c|}
            \hline
            $(5, 10)$ & $(5, -20)$ & $(0, 0)$ \\
            \hline
            $(5, 10)$ & $(5, -20)$ & $(0, 0)$ \\
            \hline
        \end{tabular}
    }
    \qquad
    \subfloat[Type 2]{
        \begin{tabular}{|c|c|c|}
            \hline
            $(5, -20)$ & $(5, 10)$ & $(0, 0)$ \\
            \hline
            $(5, -20)$ & $(5, 10)$ & $(0, 0)$ \\
            \hline
        \end{tabular}
    }
    }
    }
    \caption{Payoff matrices for Exposing game. The first number in the tuple indicates the payoff for player 1.}
    \label{fig:expoising-matrices}
\end{figure}

The payoff matrices are shown in Fig.~\ref{fig:expoising-matrices}

The training curves for the three algorithms are shown in Fig.~\ref{fig:exposing-curve}. In the latter two plots, the strategies for both types of player 1 stay the same throughout the training process. This is why there is only one curve visible.

\begin{figure}[t]
    \centering
    \includegraphics[width=.8\linewidth]{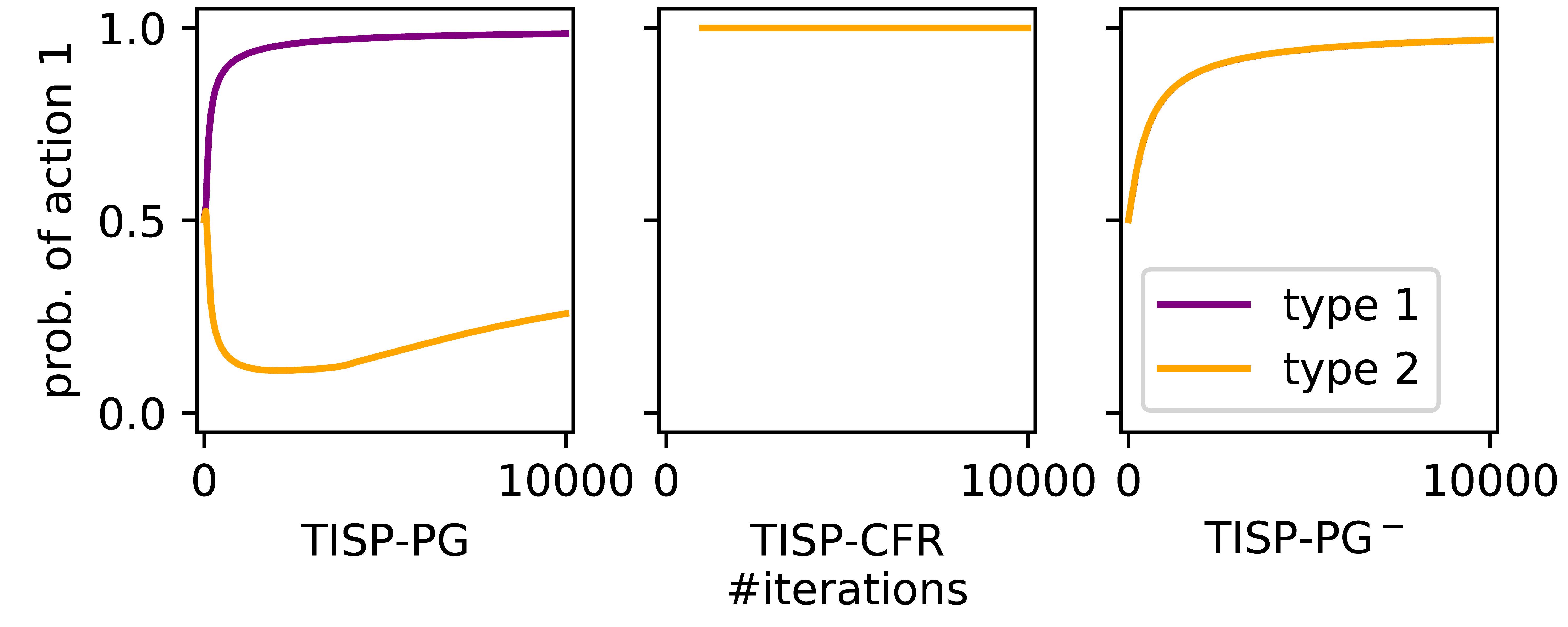}
    \caption{Training curves for the first round of Exposing.}
    \label{fig:exposing-curve}
\end{figure}

\subsection{Tagging game}

The target for ally is on the upper left corner, and the target for enemy is on the upper right corner. The rewards for both players can be written in forms of :
    $r = r_{d} + r_{tag}$
, where $r_d$ is the reward for players to get close to its own targets, and $r_{tag}$ is the reward that is non-zero only when player 2 takes tag action. Specifically for player 1, $r_{d} = -0.25Dist(a, target)^{\frac{2}{5}}$, where $Dist(1, target)$ means the euclidean distance between player 1 and the target of its type. $r_{tag} = -10$ if player 2 tags player 1 in that round and $r_{tag}=0$ if player 2 does not do tag action in this round. Note that even if player 1 is actually an \textit{ally} type, it still gets a penalty. 
For player 2, $r_{d} = -0.25Dist(1, 2)^{\frac{2}{5}}$, where $Dist(1, 2)$ means the euclidean distance between the two players. $r_{tag} = -10$ if player 1 is of ally type and is tagged by player 2 this round, $r_{tag} = 20$ if player 1 is of enemy type and is tagged by player 2 this round. The prior distribution of ally type and enemy type is $[0.5, 0.5]$. The tag action can be used multiple times through the game. We set an episode length limit of $5$ in our experiments. While $5$ steps seem not able to make both player reach their target, the environment does not give any huge reward for reaching the target.
Remark that We assume a player can only know its reward when the game ends, preventing player 2 from knowing player 1's type by immediately looking at its reward after taking a tag action.

\section{Changes to Policy Gradient Corresponding to Belief-space Policy}~\label{app-sec:pg}

\begin{proof}

We demonstrate the proof for player 1. The proof for player 2 should be basically the same with some minor modifications. 

\small{
\begin{align}
    \nabla_{\theta_{1, \lambda}} v^{\pi_1,\pi_2}_{1,\lambda}(b, s)=\nabla_{\theta_{1,  \lambda}}\sum_{a_1}\pi_{1, \lambda}(a_1|b, s;\theta_{1, \lambda})
    \sum_{a_2}\pi_{2}(a_2|b, s;\theta_2)\nonumber\\
    \bigg(u_{1,\lambda}(s, a_1, a_2)
    +\gamma \sum_{s'}P(s'|s, a_1, a_2)v^{\pi_1, 
    \pi_2}_{1, \lambda}(b', s')\bigg)
\end{align}
}

Let $Q^{\pi_1, \pi_2}_{1,\lambda}(b, s, a_1, a_2)=u_{1,\lambda}(s, a_1, a_2)+\gamma \sum_{s'}P(s'|s, a_1, a_2)v^{\pi_1, \pi_2}_{1,\lambda}(b, s')$. Then we have

\begin{align}
    \nabla_{\theta_{1, \lambda}} v^{\pi_1,\pi_2}_{1, \lambda}(b, s)=&\nabla_{\theta_{1, \lambda}}\sum_{a_1}\pi_{1, \lambda}(a_1|b, s;\theta_{1, \lambda})\sum_{a_2}\pi_{2}(a_2|b, s;\theta_2)\nonumber\\
    &\cdot Q^{\pi_1, \pi_2}_{1,\lambda}(b', s, a_1, a_2)\nonumber\\
    =&\sum_{a_2}\pi_2(a_2|b, s;\theta_2)\sum_{a_1}\bigg(Q^{\pi_1, \pi_2}_{1,\lambda}(b', s, a_1, a_2)\nonumber\\
    &\cdot \nabla_{\theta_{1,\lambda}}\pi_{1,\lambda}(a_1|b, s;\theta_{1,\lambda})+\pi_{1,\lambda}(a_1|b, s;\theta_{1,\lambda})\nonumber\\
    &\qquad\cdot\nabla_{\theta_{1,\lambda}}Q^{\pi_1, \pi_2}_{1,\lambda}(b', s, a_1, a_2)\bigg)
    \label{eq:v}
\end{align}

where $b'=\text{Bayes}(b, \pi_1)$. Now we expand the term $\nabla_{\theta_{1,\lambda}}Q^{\pi_1, \pi_2}_{1,\lambda}(b', s, a_1, a_2)$:

\begin{align}
    \nabla_{\theta_{1,\lambda}}&Q^{\pi_1, \pi_2}_{1,\lambda}(\text{Bayes}(b, \pi_1), s, a_1, a_2)\\
    =& \nabla_{\theta_{1,\lambda}}\text{Bayes}(b, \pi_1)\nabla_{b'}Q^{\pi_1, \pi_2}_{1,\lambda}(b', s, a_1, a_2)\nonumber\\
    &\qquad +\nabla_{\theta_{1,\lambda}}Q^{\pi_1, \pi_2}_{1,\lambda}(b', s, a_1, a_2)\nonumber\\
    =& \nabla_{\theta_{1,\lambda}}\text{Bayes}(b, \pi_1)\nabla_{b'}Q^{\pi_1, \pi_2}_{1,\lambda}(b', s, a_1, a_2) \nonumber \\
    &\qquad+ \gamma \sum_{s'}P(s'|s, a_1, a_2)\nabla_{\theta_{1,\lambda}}v^{\pi_1,\pi_2}_{1,\lambda}(b', s')
    \label{eq:q}
\end{align}

Let $\phi(s)=\sum_{a_2}\pi_2(a_2|b, s;\theta_2)\sum_{a_1}\big(Q^{\pi_1, \pi_2}_{1,\lambda}(b', s, a_1, a_2)$\\$\nabla_{\theta_{1,\lambda}}\pi_{1,\lambda}(a_1|b, s;\theta_{1,\lambda})+\gamma \pi_{1,\lambda}(a_1|b, s;\theta_{1,\lambda})\nabla_{\theta_{1,\lambda}}\text{Bayes}(b, \pi_1)$\\$\nabla_{b'}Q^{\pi_1, \pi_2}_{1,\lambda}(b', s, a_1, a_2)\big)$. Together with (\ref{eq:q}), we can rewrite (\ref{eq:v}) to:

\small{
\begin{align}
    \nabla_{\theta_{1, \lambda}} v^{\pi_1,\pi_2}_{1, \lambda}(b, s)=&\phi(s)+\gamma \sum_{a_2}\pi_2(a_2|b, s;\theta_2)\sum_{a_1}\pi_{1,\lambda}(a_1|b, s;\theta_{1,\lambda})\nonumber\\
    &\qquad\cdot\sum_{s'}P(s'|s, a_1, a_2)\nabla_{\theta_{1,\lambda}}v^{\pi_1,\pi_2}_{1,\lambda}(b', s')
\end{align}}

With policy gradient theorem, we have:

\begin{align}
    \nabla_{\theta_{1, t}}& J(\theta_{1,\lambda})\propto \sum_{s,b}\sum_{a_2}\pi_2(a_2|b, s;\theta_2)\sum_{a_1}\bigg(Q^{\pi_1, \pi_2}_{1,\lambda}(b', s, a_1, a_2)\nonumber\\
    &\qquad\nabla_{\theta_{1,\lambda}}\pi_{1,\lambda}(a_1|b, s;\theta_{1,\lambda})+\gamma\pi_{1,\lambda}(a_1|b, s;\theta_{1,\lambda})\nonumber\\
    &\qquad\nabla_{\theta_{1,\lambda}}\text{Bayes}(b, \pi_1)\nabla_{b'}Q^{\pi_1, \pi_2}_{1,\lambda}(b', s, a_1, a_2)\bigg)\nonumber\\
    =&\mathbb{E}_{s,b,a_1, a_2}\bigg[Q^{\pi_1, \pi_2}_{1,\lambda}(b', s, a_1, a_2)\nabla_{\theta_{1,\lambda}}\ln\pi_{1,\lambda}(a_1|b, s;\theta_{1,\lambda})\nonumber\\
    &\qquad+\gamma\nabla_{\theta_{1,\lambda}}\text{Bayes}(b, \pi_1)\nabla_{b'}Q^{\pi_1, \pi_2}_{1,\lambda}(b', s, a_1, a_2)\bigg]\nonumber\\
    =&\mathbb{E}_{s,b,a_1, a_2}\bigg[Q^{\pi_1, \pi_2}_{1,\lambda}(b', s, a_1, a_2)\nabla_{\theta_{1,\lambda}}\ln\pi_{1,\lambda}(a_1|b, s;\theta_{1,\lambda})\nonumber\\
    &\qquad+\gamma\nabla_{\theta_{1,\lambda}}\text{Bayes}(b, \pi_1)\nabla_{b'}v^{\pi_1, \pi_2}_{1,\lambda}(b', s')\bigg]
\end{align}

\end{proof}

\section{Proof of the Main Theorem}\label{app-sec:main-proof}

In this proof, we adopt a slightly different notation system for better interpretability.

\begin{itemize}
	\item We use $\sigma_i$ instead of $\pi_i$ to denote Player $i$'s strategy; $\pi$ in this proof will be used to denote the reaching probability in the game tree.
	\item We use $\bar{u}_i^\lambda(s,a_1,a_2)$ to denote the utility function in the original game instead of $u_i^\lambda(s,a_1,a_2)$; with some abuse of notation, we use $u_i$ generally to denote the expected utility of Player $i$ instead of $V_i$.
	\item We use $R$ or $r$ to denote the number of round instead of $L$ or $l$.
\end{itemize}

This proof is organized into four steps.

\begin{enumerate}
	\item We first convert our definition of OSSBG into an equivalent extensive-form game with imperfect information in Appx.~\ref{app-sec:to-efg}. We will base our proof on the EFG-form OSSBG and a corresponding EFG-form of our algorithm.
	\item In Sec.~\ref{app-sec:construct-oracle}, we show how to use the prior space approximation and a fixed-prior game solver to build an all-prior game solver (an oracle).
	\item In Sec.~\ref{app-sec:one-step}, we show how to use a subforest induced game oracle and no-regret learning algorithms to solve the game.
	\item In Sec.~\ref{app-sec:full}, we combine the steps above and show how to solve the game from scratch using backward induction.
\end{enumerate}

\subsection{Construct an EFG-form OSSBG}\label{app-sec:to-efg}

\begin{algorithm}[t]
\caption{Construct an extensive form OSSBG}\label{app-alg:cons-full}
\begin{algorithmic}[1]
\Procedure{ConstructSubforest}{$r,s,\boldsymbol{v}$} \Comment{\parbox[t]{.35\linewidth}{Depth $r$, initial state $s$, initial nodes $\boldsymbol{v}=\{v_\lambda\}_{\lambda\in\Lambda}$}}
		
		\If{$r\geq 1$}
				\State Make each node $v_\lambda\in \boldsymbol{v}$ a player 1's decision node
				\For{$a_1\in \mathcal{A}_1$}
					\State Create $|\Lambda|$ new nodes $\{\tilde{v}_{\lambda,a_1}\}_{\lambda\in\Lambda}$
					\State Connect each $v_\lambda$ to $\tilde{v}_{\lambda,a_1}$ via player 1's action $a_1$
					\State Make $\tilde{v}_{\lambda,a_1}$ an player 2's decision node
					\For{$a_2\in \mathcal{A}_2$}
						\State Create $|\Lambda|$ new nodes $\{\tilde{v}_{\lambda, a_1, a_2}\}_{\lambda\in\Lambda}$
						\State Connect each $\tilde{v}_{\lambda, a_1}$ to $\tilde{v}_{\lambda,a_1, a_2}$ via player 2's action $a_2$
						\State Make $\tilde{v}_{\lambda,a_1, a_2}$ a Chance node
						\State $u_i(\tilde{v}_{\lambda,a_1, a_2})\gets \bar{u}_i^\lambda(s,a_1,a_2)$ for $i=1,2$
						\For{$s'\in S$}
							\State Create $|\Lambda|$ new nodes $\{\tilde{v}_{\lambda, a_1, a_2, s'}\}_{\lambda\in\Lambda}$
							\State Connect each $\tilde{v}_{\lambda,a_1, a_2}$ to $\tilde{v}_{\lambda, a_1, a_2, s'}$ with  $P(s'|s, a_1, a_2)$
							\State \Call{ConstructSubforest}{$r-1, s', \{\tilde{v}_{\lambda, a_1, a_2, s'}\}_{\lambda\in\Lambda}$}
						\EndFor
					\EndFor
				\EndFor
		\EndIf
	\EndProcedure
	
	\State
\Function{Construct}{$r, s, p$}\Comment{\parbox[t]{.35\linewidth}{Depth $r$, initial state $s$, initial type distribution $p$}}
	\State Create a Chance node $c$
	\State Create $|\Lambda|$ new nodes $\boldsymbol{v}=\{v_{\lambda}\}_{\lambda\in\Lambda}$
	\State Connect $c$ to each node $v_\lambda\in\boldsymbol{v}$ with probability $p_\lambda$
	\State \Call{ConstructSubforest}{$r, s, \boldsymbol{v}$} 
\EndFunction
\end{algorithmic}
	
\end{algorithm}

\begin{definition}[One-sided Stochastic Bayesian Game]
	We call a game that can be constructed by Algorithm \ref{app-alg:cons-full} an One-sided Stochastic Bayesian Game, noted $\Gamma(r,s,p)$ or simply $\Gamma$. We call $r$ the depth of the game, denoted $\dep(\Gamma)$.
\end{definition}

\subsection{Constructing an oracle given samples}\label{app-sec:construct-oracle}

\begin{definition}[Oracle]
	An oracle $\mathcal{O}_r$ of a game with depth $r$ is defined as a function that takes a state $s$ and prior $p$ as input and outputs a strategy for both player for game $\Gamma(r,s,p)$.
\end{definition}

\begin{lemma}\label{lem:approx}
	Let $\sigma^*$ be an $\epsilon$-NE for an One-sided Stochastic Bayesian Game noted $\Gamma(r,s,p_1)$, then $\sigma^*$ is an $(2dU+\epsilon)$-NE for game $\Gamma(r,s,p_2)$ if $\|p_1-p_2\|_{l_1}\leq d$ and $U=\max_{i\in\{1,2\}, z\in Z}u_i(z)-\min_{i\in\{1,2\}, z\in Z}u_i(z)$.
\end{lemma}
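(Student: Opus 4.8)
The plan is to exploit the fact that $\Gamma(r,s,p_1)$ and $\Gamma(r,s,p_2)$ are \emph{the same game tree} with identical terminal payoffs, differing \emph{only} in the probabilities $p_\lambda$ attached to the edges leaving the root chance node in Algorithm~\ref{app-alg:cons-full} (\textsc{Construct}). Consequently any strategy profile $\sigma=(\sigma_1,\sigma_2)$ is simultaneously a valid profile for both games, and its expected utility is \emph{linear} in the prior: writing $u_i^\lambda(\sigma)$ for the expected payoff to player $i$ conditioned on the chance node selecting type $\lambda$ (a quantity determined entirely by the type-$\lambda$ subtree and hence identical across the two games), we have $u_i(\sigma;p)=\sum_{\lambda}p_\lambda\,u_i^\lambda(\sigma)$.

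First I would bound the effect of replacing $p_1$ by $p_2$ on a \emph{single} profile. Since $\sum_\lambda(p_{1,\lambda}-p_{2,\lambda})=0$, I may subtract the midpoint $m_i:=\tfrac12\big(\max_{z}u_i(z)+\min_{z}u_i(z)\big)$ from each conditional payoff without changing the sum, giving
\begin{align}
|u_i(\sigma;p_1)-u_i(\sigma;p_2)|
&=\Big|\sum_\lambda (p_{1,\lambda}-p_{2,\lambda})\big(u_i^\lambda(\sigma)-m_i\big)\Big|\nonumber\\
&\le \tfrac{U}{2}\,\|p_1-p_2\|_{l_1}\le \tfrac{dU}{2},\nonumber
\end{align}
where I used that each $u_i^\lambda(\sigma)$, being a convex combination of terminal payoffs, lies within $U/2$ of $m_i$, together with the hypothesis $\|p_1-p_2\|_{l_1}\le d$.

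Next I would convert the $\epsilon$-NE property at $p_1$ into one at $p_2$ by a three-term decomposition of the exploitability. Fix a player $i$ and an arbitrary deviation $\sigma'_i$ (the deviation space is common to both games). Writing $\sigma^*=(\sigma^*_i,\sigma^*_{-i})$, I split
\begin{align}
u_i(\sigma'_i,\sigma^*_{-i};p_2)-u_i(\sigma^*;p_2)
=&\;\big[u_i(\sigma'_i,\sigma^*_{-i};p_2)-u_i(\sigma'_i,\sigma^*_{-i};p_1)\big]\nonumber\\
&+\big[u_i(\sigma'_i,\sigma^*_{-i};p_1)-u_i(\sigma^*;p_1)\big]\nonumber\\
&+\big[u_i(\sigma^*;p_1)-u_i(\sigma^*;p_2)\big].\nonumber
\end{align}
The middle bracket is at most $\epsilon$ because $\sigma^*$ is an $\epsilon$-NE of $\Gamma(r,s,p_1)$; the first and third brackets are each at most $dU/2$ by the single-profile bound above. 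Taking the supremum over $\sigma'_i$ yields exploitability at most $dU+\epsilon\le 2dU+\epsilon$, which establishes that $\sigma^*$ is a $(2dU+\epsilon)$-NE of $\Gamma(r,s,p_2)$.

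I do not expect a genuine obstacle here: the argument is a Lipschitz-in-the-prior estimate followed by a triangle inequality, and the slack between the $dU+\epsilon$ I obtain and the stated $2dU+\epsilon$ only works in our favor. The one point that requires care is justifying that the conditional payoffs $u_i^\lambda(\sigma)$ are genuinely identical in the two games — i.e. that altering the root prior changes nothing below the type-selecting chance node — which is immediate from the construction in Algorithm~\ref{app-alg:cons-full}; the other is the centering step, since bounding by $\max_\lambda|u_i^\lambda(\sigma)|$ rather than the range $U$ would make the constant depend on the absolute magnitude of the payoffs instead of on $U$.
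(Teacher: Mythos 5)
Your proof is correct and takes essentially the same route as the paper's: both rest on the observation that the utility is linear in the prior, $u_i(\sigma;p)=\sum_\lambda p_\lambda u_i^\lambda(\sigma)$, with conditional payoffs $u_i^\lambda(\sigma)$ identical across the two games, followed by a triangle-inequality/H\"older estimate against $\|p_1-p_2\|_{l_1}\leq d$. The only cosmetic difference is bookkeeping: the paper groups the two perturbation terms into a single inner product $(p_2-p_1)\cdot\big(\hat{u}_i(\sigma'_i,\sigma^*_{-i})-\hat{u}_i(\sigma^*)\big)$ bounded by $dU$, whereas you bound the two terms separately at $dU/2$ each via centering at the payoff midpoint; both arguments in fact deliver $dU+\epsilon$, slightly stronger than the stated $2dU+\epsilon$.
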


\begin{proof}
	Let the utility function of $\Gamma(r, s,p)$ be $u(p, \sigma)$. We have
	\begin{align}
		u_i(p, \sigma)=&\sum_{z\in Z} \pi^\sigma(z)u_i(z)\\
		=&\sum_{\lambda\in\Lambda} p_\lambda\sum_{z\in Z(v_\lambda)}\pi^\sigma(v_\lambda\rightarrow z) u_i(z) \label{equ:decompose}
	\end{align}

	Let $\hat{u}^\lambda_i(\sigma):=\sum_{z\in Z(v_\lambda)}\pi^\sigma(v_\lambda\rightarrow z)u_i(z)$. Let $\hat{\boldsymbol{u}}_i(\cdot):=[\hat{u}^1_i(\cdot),\dots, \hat{u}^{|\Lambda|}_i(\cdot)]^\top$. From (\ref{equ:decompose}), we can see that $u_i(p,\sigma)=p\cdot \hat{\boldsymbol{u}}_i(\sigma)$.

	Since $\sigma^*$ is an $\epsilon$-NE for $\Gamma(r,s,p_1)$, we have
	
	\begin{align}
    	\max_{\sigma'_i\in\Sigma_i}u_i(p_1,(\sigma'_i, \sigma^*_{-i}))-u_i(p_1,\sigma^*)&\leq \epsilon\\
    	\max_{\sigma'_i\in\Sigma_i}p_1\cdot \hat{\boldsymbol{u}}_i(\sigma'_i, \sigma^*_{-i})-p_1\cdot \hat{\boldsymbol{u}}_i(\sigma^*)&\leq \epsilon
	\end{align}
	
	Therefore, using $\sigma^*$ in $\Gamma(r,s,p_2)$ yields
	
	\begin{align}
		&\max_{\sigma'_i\in\Sigma_i}u_i(p_2,(\sigma'_i,\sigma^*_{-i}))-u_i(p_2,\sigma^*)\\
		=&\max_{\sigma'_i\in\Sigma_i}p_2\cdot \hat{\boldsymbol{u}}_i(\sigma'_i, \sigma^*_{-i})-p_2\cdot \hat{\boldsymbol{u}}_i(\sigma^*)\\
		=&\max_{\sigma'_i\in\Sigma_i}(p_2-p_1)\cdot \hat{\boldsymbol{u}}_i(\sigma'_i, \sigma^*_{-i})-(p_2-p_1)\cdot \hat{\boldsymbol{u}}_i(\sigma^*)\ +\nonumber\\
		&\qquad p_1\cdot \hat{\boldsymbol{u}}_i(\sigma'_i, \sigma^*_{-i})-p_1\cdot \hat{\boldsymbol{u}}_i(\sigma^*)\\
		\leq&\max_{\sigma'_i\in\Sigma_i}(p_2-p_1)\cdot (\hat{\boldsymbol{u}}_i(\sigma'_i, \sigma^*_{-i})-\hat{\boldsymbol{u}}_i(\sigma^*))\ +\nonumber\\
		&\qquad \max_{\sigma'_i\in\Sigma_i}p_1\cdot \hat{\boldsymbol{u}}_i(\sigma'_i, \sigma^*_{-i})-p_1\cdot \hat{\boldsymbol{u}}_i(\sigma^*)\\
		\leq& dU+\epsilon
	\end{align}
\end{proof}

\begin{lemma}
	A weighted sum of multiple $\epsilon$-NEs is also an $\epsilon$-NE.
\end{lemma}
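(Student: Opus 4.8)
The plan is to use the zero-sum hypothesis of Theorem~\ref{thm:main} together with the fact that the duality (exploitability) gap of a profile is a convex function of the profile, so that a weighted combination can only decrease it. I treat all the given $\epsilon$-NE as equilibria of one common game $\Gamma(r,s,p)$ with value $v$ and utility $u:=u_1=-u_2$; this is precisely the situation left by Lemma~\ref{lem:approx}, which first re-expresses each sampled solution as an approximate NE of the target game before they are averaged.

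First I would fix notation: let $w_1,\dots,w_K\ge 0$ with $\sum_k w_k = 1$, let $\sigma^{(1)},\dots,\sigma^{(K)}$ be the given $\epsilon$-NE, and let $\bar\sigma=\sum_k w_k\sigma^{(k)}$ be their weighted sum, read as a \emph{mixture} so that $u$ is affine in each player's component. Define the per-player guarantees $\underline u(\sigma_1):=\min_{\sigma_2'\in\Sigma_2}u(\sigma_1,\sigma_2')$ and $\overline u(\sigma_2):=\max_{\sigma_1'\in\Sigma_1}u(\sigma_1',\sigma_2)$, and the gap $g(\sigma):=\overline u(\sigma_2)-\underline u(\sigma_1)$. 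Two elementary reductions drive the proof: $(\mathrm{i})$ an $\epsilon$-NE satisfies $g(\sigma)\le 2\epsilon$, obtained by adding the two one-sided regret bounds $\overline u(\sigma_2)-u(\sigma)\le\epsilon$ and $u(\sigma)-\underline u(\sigma_1)\le\epsilon$; and $(\mathrm{ii})$ any profile with $g(\sigma)\le\delta$ is a $\delta$-NE, since each player's regret is bounded by $g(\sigma)$ once one uses $\underline u(\sigma_1)\le u(\sigma)\le\overline u(\sigma_2)$.

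The core step is convexity: $\overline u(\sigma_2)$ is a maximum of functions affine in $\sigma_2$, hence convex in $\sigma_2$, and $-\underline u(\sigma_1)=\max_{\sigma_2'}\big(-u(\sigma_1,\sigma_2')\big)$ is convex in $\sigma_1$; their sum $g$ is therefore jointly convex. Applying Jensen at $\bar\sigma$ and then reduction $(\mathrm{i})$ gives $g(\bar\sigma)\le\sum_k w_k\,g(\sigma^{(k)})\le 2\epsilon$, and reduction $(\mathrm{ii})$ turns this into the conclusion that $\bar\sigma$ is a ($2\epsilon$-)NE. Weak duality, $\underline u(\sigma_1)\le v\le\overline u(\sigma_2)$, is what keeps all these quantities organized around the single value $v$.

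I expect the genuine difficulty to lie not in the convexity core but in two accounting points. The first is that the weighted sum must be formed in a representation in which $u$ is affine in each player's strategy — a mixture over the $\sigma^{(k)}$, or a convex combination of sequence-form realization plans — because a naive per-information-set average of behavioral strategies is not realization-equivalent to such a mixture; I would therefore phrase the combination on the (convex) realization-plan polytope and invoke Kuhn's theorem under perfect recall. The second is the factor between the symmetric gap $g$ and the paper's one-sided definition of $\epsilon$-NE, which produces the constant $2$ above; I would either tighten it or absorb it into the constant $c$ of Theorem~\ref{thm:main}, consistent with the way constants are already handled in Lemma~\ref{lem:approx}.
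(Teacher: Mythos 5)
Your convexity core (reductions (i) and (ii), joint convexity of the gap, Jensen) is sound, but what it delivers is ``a weighted sum of $\epsilon$-NEs of a \emph{zero-sum} game is a $2\epsilon$-NE,'' which is strictly weaker than the lemma, and the deficit is not cosmetic. The paper's own proof is a two-line linearity argument that needs neither zero-sumness nor the gap: writing $\sigma^*=\sum_j w_j\sigma^*_j$, player $i$'s regret against the mixture is rewritten exactly as $\max_{\sigma'_i}\sum_j w_j\big(u_i(\sigma'_i,\sigma^*_{j,-i})-u_i(\sigma^*_j)\big)$ --- using affinity of $u_i$ in the opponent's strategy together with the evaluation $u_i(\sigma^*)=\sum_j w_j u_i(\sigma^*_j)$ --- and pushing the max inside the sum bounds this by $\sum_j w_j\epsilon=\epsilon$, for every player, in a general-sum game. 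The constant $1$ matters here: this lemma is invoked once per round inside the backward induction (it is what makes Algorithm~\ref{app-alg:approx} return a $(dU+\epsilon)$-NE, feeding the recursion $\epsilon_r=\epsilon_{r-1}+dU+\epsilon_{\text{min}}(T)$ in the proof of Lemma~\ref{lem:oracle}). If mixing only preserves $2\epsilon$, that recursion becomes $\epsilon_r\le 2\big(\epsilon_{r-1}+dU+\epsilon_{\text{min}}(T)\big)$, which compounds to a bound exponential in $L$ rather than the linear bound $L(dU+c\,T^{-1/2})$ of Theorem~\ref{thm:main}; the constant $c$ there multiplies $T^{-1/2}$ only, so there is nowhere to ``absorb'' a factor that multiplies the accumulated error each round. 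So the fallback you propose does not work.

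The other escape route, ``tighten it,'' also fails \emph{within your semantics}, and this is the real subtlety your representation discussion brushes against. You form the mixture as a product of per-player marginals ($\bar\sigma_i=\sum_k w_k\sigma^{(k)}_i$, independently for each player), whereas the paper's step $u_i(\sigma^*)=\sum_j w_j u_i(\sigma^*_j)$ evaluates the mixture as a lottery over the component \emph{profiles}. Under your product reading the exact-$\epsilon$ claim is actually false even in zero-sum games: take $\mathcal{A}_1=\{A,B,C\}$, $\mathcal{A}_2=\{X,Y\}$ with player 1's payoffs $u(A,X)=u(B,Y)=0$, $u(A,Y)=u(B,X)=-\epsilon$, $u(C,X)=u(C,Y)=\epsilon$; then $(A,X)$ and $(B,Y)$ are both $\epsilon$-NEs, yet in the equal-weight product mixture player 1 earns $-\epsilon/2$ while deviating to $C$ earns $\epsilon$, a regret of $3\epsilon/2>\epsilon$. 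Hence no refinement of the Jensen argument can reach constant $1$ there; to prove the lemma as stated you must adopt the paper's lottery evaluation of the mixture's payoff, and once you do, the direct regret decomposition above makes the duality gap, weak duality, and the Kuhn's-theorem detour unnecessary.
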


\begin{proof}
	Let $\Sigma^*=\{\sigma^*_1,\dots,\sigma^*_n\}$ be a set of $\epsilon$-NEs for some game with utility function $u$. Let $\boldsymbol{w}$ be a weight distribution over $\Sigma^*$, where $\sum_{j=1}^n \boldsymbol{w}_j=1$ and $\boldsymbol{w}_j\geq 0$ for all $j=1,\dots, n$.
	
	Let the weighted mixed strategy be $\sigma^*=\sum_{j=1}^n\boldsymbol{w}_j\sigma^*_j$. For each player $i$, We have
	
	\begin{align}
		&\max_{\sigma'_i\in\Sigma_i}u_i(\sigma'_i, \sigma^*_{-i})	-u_i(\sigma^*)\\
		=&\max_{\sigma'_i\in\Sigma_i}\sum_{j=1}^n\boldsymbol{w}_i\Big(u_i(\sigma'_i,\sigma^*_{j,-i})-u_i(\sigma^*_j)\Big)\\
		\leq&\sum_{j=1}^n\boldsymbol{w}_i\Big(\max_{\sigma'_i\in\Sigma_i}u_i(\sigma'_i,\sigma^*_{j,-i})-u_i(\sigma^*_j)\Big)\\
		\leq&\sum_{j=1}^n\boldsymbol{w}_i\epsilon\\
		=&\epsilon
	\end{align}

\end{proof}

\begin{definition}[$d$-dense]
	We call a type distribution set $\boldsymbol{p}=\{p_1,\dots, p_K\}$ $d$-densely distributed if for any point $p$ in the type distribution space, there is at least one point $p^*$ in $\boldsymbol{p}$ s.t. $\|p-p^*\|\leq d$.
\end{definition}

\begin{algorithm}
\caption{Approximate NE for $\Gamma(r,s,\cdot)$}\label{app-alg:approx}
\begin{algorithmic}[1]
\Function{Prepare}{$d$}\Comment Density parameter $d$	\State Choose a $d$-densely distributed set $\boldsymbol{p}=\{p_1,\dots, p_K\}$
	\For {$i:=1\rightarrow K$}
		\State Compute an $\epsilon$-NE $\sigma^*_i$ for $\Gamma(r,s,p_i)$
	\EndFor
	\State $\boldsymbol{\sigma}^* \gets \{\sigma^*_1,\dots, \sigma^*_K\}$
	\State \Return $\boldsymbol{p}, \boldsymbol{\sigma}^*$
\EndFunction
\Function{Approximate}{$\boldsymbol{p}, \boldsymbol{\sigma}^*, p$}
	\State $\boldsymbol{i}\gets\{i| \|p_i-p\|_{l_1}\leq d\}$
	\State \Return a weighted sum of $\{\sigma^*_{i}|i\in\boldsymbol{i}\}$
\EndFunction
\end{algorithmic}

\end{algorithm}

\begin{corollary}
	The approximated NE in Algorithm~\ref{app-alg:approx} is $(dU+\epsilon)$-NE.
\end{corollary}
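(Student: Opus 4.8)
The plan is to obtain the corollary as a direct chaining of the two lemmas already proved in this subsection, with no new technical machinery required. First I would record that the \emph{$d$-dense} property imposed on $\boldsymbol{p}=\{p_1,\dots,p_K\}$ by the \textsc{Prepare} routine guarantees that for every query prior $p$ the index set $\boldsymbol{i}=\{i\mid \|p_i-p\|_{l_1}\leq d\}$ used in \textsc{Approximate} is nonempty. Hence the returned object, a weighted sum of the strategies $\{\sigma^*_i\mid i\in\boldsymbol{i}\}$, is well defined, and its weights can be taken to form a distribution (summing to one).

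Next I would transport each selected $\sigma^*_i$ from its own fixed-prior game to the target game $\Gamma(r,s,p)$. By construction of \textsc{Prepare}, each $\sigma^*_i$ is an $\epsilon$-NE for $\Gamma(r,s,p_i)$, and membership in $\boldsymbol{i}$ means $\|p_i-p\|_{l_1}\leq d$. Applying Lemma~\ref{lem:approx} with $p_1=p_i$ and $p_2=p$ then shows that each $\sigma^*_i$ is a $(dU+\epsilon)$-NE for the \emph{same} game $\Gamma(r,s,p)$. Here I would invoke the sharper bound that the proof of Lemma~\ref{lem:approx} actually delivers via H\"older's inequality, namely $|(p-p_i)\cdot v|\leq \|p-p_i\|_{l_1}\,\|v\|_\infty\leq dU$, giving the perturbation term $dU$ rather than $2dU$.

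Finally, since all the $\sigma^*_i$ for $i\in\boldsymbol{i}$ are now $(dU+\epsilon)$-NEs of one common game, I would apply the weighted-sum lemma (a convex combination of $\epsilon'$-NEs is again an $\epsilon'$-NE) with $\epsilon'=dU+\epsilon$, which immediately yields that the approximated strategy output by Algorithm~\ref{app-alg:approx} is a $(dU+\epsilon)$-NE. The argument is essentially a bookkeeping composition of the two lemmas, so there is no genuine technical obstacle; the only point to handle carefully is the mild inconsistency between the \emph{statement} of Lemma~\ref{lem:approx} (which advertises $2dU+\epsilon$) and its \emph{proof} (which establishes $dU+\epsilon$). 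I would flag this explicitly and rely on the tighter bound proven in the body, since that is precisely what the corollary's constant $dU+\epsilon$ requires.
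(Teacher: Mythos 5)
Your proof is correct and is essentially the paper's own (implicit) argument: the corollary is exactly the chaining of Lemma~\ref{lem:approx} applied to each sampled prior $p_i$ with $\|p_i-p\|_{l_1}\leq d$ (well-defined by $d$-density), followed by the weighted-sum lemma with $\epsilon'=dU+\epsilon$. Your flag about the mismatch between Lemma~\ref{lem:approx}'s stated constant $2dU+\epsilon$ and the $dU+\epsilon$ its proof actually establishes (via the H\"older bound $|(p_2-p_1)\cdot v|\leq\|p_2-p_1\|_{l_1}\|v\|_\infty\leq dU$) is accurate, and the tighter bound is indeed what the paper relies on both here and downstream in Theorem~\ref{thm:main}.
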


\subsection{Solve one game with subforest induced game oracles}\label{app-sec:one-step}

\begin{definition}[Subforest]
	We call the recursive structure (a collection of sub-trees) constructed by a call of \textsc{ConstructSubforest} during a construction of an OSSBG a subforest of that game, denoted as $F$. We call the parameter $r$ the depth of the subforest, denoted as $\dep(G)$. Given a game $\Gamma$, a subforest $F$ of $\Gamma$ can be uniquely identified by the action-state history $h=\{s^0,\boldsymbol{a}^0, s^1,\dots,\boldsymbol{a}^{n-1}, s^n\}$.
\end{definition}

Due to the recursive structure of a subforest, we can choose any subforest $G$ then call a \textsc{Construct} operation at the top of it with a type distribution parameter $p$ to form a new OSSBG. We call this game a subforest $G$-induced game, denoted $\Gamma(F, p)$.

We call a subforest $F$ of $\Gamma$ a first-level subforest iff $\dep(F)=\dep(\Gamma)-1$. $F$ can be uniquely identified by the first action-state pair $\boldsymbol{a}^0,s^0$.


We can decompose a full strategy $\sigma$ for game $\Gamma(r,s,p)$ into $1+|\Omega|\cdot|\mathcal{A}_1|\cdot|\mathcal{A}_2|$ strategies: $\hat{\sigma}$, which represents the strategies for the first one action for both players; and $\sigma^{a_1, a_2, s'}$, which represents the strategies for all the first-level subforests $G(a_1,a_2,s')$. 

Suppose we have an oracle for all first-level subforests induced games, we now only need to focus on finding a $\epsilon$-NE for the first step strategies $\hat{\sigma}$. We formulate this problem into an  \textbf{online convex programming problem}.

\begin{definition}
An \textbf{online convex programming problem} consists of a strategy space $\Sigma\subseteq\mathbb{R}^n$ and an infinite sequence $\{c^1, c^2,\dots\}$ where each $c^t:\Sigma\mapsto\mathbb{R}$ is a convex function.

At each step $t$, the \textbf{online convex programming algorithm} selects an vector $x^t\in \Sigma$. After the vector is selected, it receives the payoff function $c^t$.

\end{definition}

The objective of an online convex programming problem is to propose a sequence of $x^t$ that maximizes $\lim_{T\rightarrow \infty}\frac{1}{T}\sum_{t=1}^Tc^t(x^t)$.

In particular, for Player 1, the strategy space is a concatenation of all $|\Lambda|$ types' corresponding strategies, i.e. $\Sigma_1=(\Delta_{\mathcal{A}_1})^{|\Lambda|}$; for Player 2, since it doesn't have the type information, the strategy space is simply $\Sigma_2=\Delta_{\mathcal{A}_2}$. Finding an optimal solution can be modeled into two parallel online convex programming problems, where both players aim to maximize their own payoff and can influence each other's payoff at each time step. The time-averaged strategy $\frac{1}{T}\sum_{t=1}^T x^t$ is used as the optimal solution.

Modeling a multi-player optimizing problem into an online programming problem is a common technique in equilibrium-attaining literature.

Another important concept associated with online convex programming problem in \textit{external regret}.

\begin{definition}[External regret]
	The external regret of an online convex programming algorithm is defined as
	\begin{equation}
		R^T=\frac{1}{T}\max_{x^*\in \Sigma}\sum_{t=1}^Tc^t(x^*)-c^t(x^t)
	\end{equation}
\end{definition} 

An online convex programming algorithm is said to be \textit{no-regret} iff for any $\{c^t\}$, $\lim_{T\rightarrow\infty}R^T=0$.

In our paper, we consider two streams of such algorithms: \textit{gradient ascent} and \textit{regret matching}.

\subsubsection*{Gradient ascent}

\begin{align}
	\theta^{t+1}&=\theta^t+\eta_t\nabla c^t(x^t)\\
	x^{t+1}&=\textsc{SoftMax}(\theta^{t+1})
\end{align}

where $\eta_t$ is the learning rate. With a carefully chosen learning rate sequence, this learning algorithm is no-regret. In some complicated games, we use a sample-based version of the algorithm and a parameterized policy.

This stream corresponds to the TISP-PG algorithm. Let $\|\nabla c\|=\sup_{x\in \Sigma, t=1,2,\dots}\|\nabla c^t(x)\|$. Gradient ascent with $\eta_t=t^{-1/2}$ has a regret bound~\cite{Zinkevich2003OnlineCP} of 

\begin{equation}
	\epsilon_{\text{min}}(T)=\frac{\|\nabla c\|^2}{\sqrt{T}}+o\left(\frac{1}{\sqrt{T}}\right)
\end{equation}

\subsubsection*{Regret matching}

Let $r^{t+1}(a)=\sum_{\tau=1}^t c^\tau (a)-c^\tau(x^\tau)$,

\begin{equation}
	x^{t+1}(a)=\frac{(r^{t+1}(a))^+}{\sum_{a'}(r^{t+1}(a'))^+}
\end{equation}

where $(\cdot)^+:=\max(\cdot, 0)$. 

Regret matching is an algorithm that has been rapidly gaining popularity over the years in solving imperfect information games, due to its success in Counterfactual Regret Minimization. Compared to gradient ascent, regret matching is hyperparameter-free and requires less computational resources since it doesn't need access to the gradient. However, combining regret matching with function approximators such as neural networks can be challenging. Nonetheless, we take inspirations from Deep CFR and propose a variant of regret matching that is sample-based and parameterizable for comparison purposes. 

This stream corresponds the TISP-CFR algorithm. Let $C=\sup_{x\in \Sigma, t=1,2,\dots}c^t(x)-\inf_{x\in \Sigma, t=1,2,\dots}c^t(x)$. Regret matching has a regret bound~\cite{zinkevich2008regret} of 

\begin{equation}
	\epsilon_{\text{min}}(T)=\frac{C\sqrt{|\mathcal{A}|}}{\sqrt{T}}+o\left(\frac{1}{\sqrt{T}}\right)
\end{equation}

We call an no-regret online convex programming algorithm a \textit{regret minimizing device} or a \textit{regret minimizer}.

We now introduce an algorithm that focuses on finding a NE for $\hat{\sigma}$ with the help of a regret minimizer and an first-level subforest induced game oracle. See algorithm \ref{app-alg:first}.

\begin{algorithm}[H]
	\caption{Compute first-level NE with regret minimizing device and first-level oracle for $\Gamma(r, s, p)$}\label{app-alg:first}

	\begin{algorithmic}[1]
		\Function{ComputeFirstLevel}{$T$}\Comment Number of iterations $T$
			\For{$t:=1\rightarrow T$}
				\State Let $\hat{\sigma}^t_1=\{\hat{\sigma}^t_{1,\lambda}\}, \hat{\sigma}^t_2$ be the strategies produced by the regret minimizing device
				\For{$a_1\in\mathcal{A}_1$}
					\For{$a_2\in\mathcal{A}_2$}
						\State $u^{\text{tmp}}_i\gets \sum_{\lambda}p_\lambda u^\lambda_i(s,a_1,a_2)$ for $i=1,2$
						\If {$r>1$}
							\For {$s'\in\Omega$}
								\State Let $F$ be the first-level subforest identified by $(a_1,a_2), s'$
								\State Let the roots of $F$ be $\boldsymbol{v}=\{v_\lambda\}_{\lambda\in\Lambda}$
								\State $p'_\lambda\gets \frac{p_\lambda\hat{\sigma}^t_{1,\lambda}(a_1)}{\sum_{\lambda'}p_{\lambda'}\hat{\sigma}^t_{1,\lambda'}(a_1)}$ \Comment The new belief
								\State $p'\gets \{p'_\lambda\}_{\lambda\in\Lambda}$
								\State Compute a $\epsilon$-NE for $\Gamma(F, p')$ and let the equilibrium value be $u^*_1, u^*_2$
								\State $u^{\text{tmp}}_i\gets u^{\text{tmp}}_i+P(s'|s,a_1,a_2)u^*_i$ for $i=1,2$
							\EndFor
							\EndIf
						\State $u^t_i(a_1,a_2)\gets u^{\text{tmp}}_i$ for $i=1,2$
					\EndFor
				\EndFor
				\State Report $u^t_1, u^t_2$ to the regret minimizing device
			\EndFor
		\EndFunction	
	\end{algorithmic}

\end{algorithm}

Suppose the regret minimizing device minimizes the regret in the rate of $\epsilon_{\text{min}}(T)$, i.e., 

\begin{equation}
	\frac{1}{T}\max_{\sigma'_i\in \Sigma_i}\sum_{t=1}^T\big(u^t_i(\sigma'_i, \sigma^t_{-i})-u_i^t(\sigma^t)\big)\leq \epsilon_{\text{min}}(T)
\end{equation}

We can see that Algorithm \ref{app-alg:first} is equivalent of playing in the full game where the full strategy is a concatenation of $\hat{\sigma}^t$ and the $\epsilon$-NE strategies in each iteration and each subforest. We denote the concatenated strategies $\sigma^t$.

The overall regret of the full game for Player $i$ is:

\begin{align}
R_i^T=&\frac{1}{T}\max_{\sigma'_i\in\Sigma_i}\sum_{t=1}^T\big(u_i(\sigma'_i, \sigma^t_{-i})-u_i(\sigma^t)\big)	\\
=&\frac{1}{T}\max_{a_i\in \mathcal{A}_i}\max_{\sigma'_i\in\Sigma_i}\sum_{t=1}^T\big(u_i(\sigma^t_i|_{a^0_i= a_i}, \sigma^t_{-i})-u_i(\sigma^t)\nonumber\\
&+\mathbb{E}_{a_{-i}, s'}[u_i(\sigma^t_i|_{F(\boldsymbol{a}, s')\rightarrow \sigma'_i}, \sigma^t_{-i})-u_i(\sigma^t_i|_{a^0_i= a_i}, \sigma^t_{-i})]\big)\\
\leq & \frac{1}{T}\max_{a_i\in \mathcal{A}_i}\sum_{t=1}^T\big(u_i(\sigma^t_i|_{a^0_i= a_i}, \sigma^t_{-i})-u_i(\sigma^t)\big)+\nonumber\\
&\quad \frac{1}{T}\max_{a_i\in \mathcal{A}_i}\sum_{t=1}^T\mathbb{E}_{a_{-i}, s'}[{\color{red}\max_{\sigma'_i\in\Sigma_i}\big(u_i(\sigma^t_i|_{F(\boldsymbol{a}, s')\rightarrow \sigma'_i}, \sigma^t_{-i})}\nonumber\\
&\quad{\color{red}-u_i(\sigma^t_i|_{a^0_i= a_i}, \sigma^t_{-i})\big)}]\label{equ:regret-decompose}
\end{align}

The left adding part of (\ref{equ:regret-decompose}) is the regret minimized in Algorithm \ref{app-alg:first}; The red part of (\ref{equ:regret-decompose}) is the maximum profit player $i$ can make when deviating from an $\epsilon$-NE in a first-level subforest. Therefore,

\begin{align}
	R_i^T\leq & \epsilon_{\text{min}}(T)+\frac{1}{T}\max_{a_i\in \mathcal{A}_i}\sum_{t=1}^T\mathbb{E}_{a_{-i}, s'}[\epsilon]\\
	=&\epsilon_{\text{min}}(T)+\epsilon
\end{align}

Therefore, $\bar{\sigma}=\frac{1}{T}\sum_{t=1}^T\sigma^t$ is an $(\epsilon_{\text{min}}(T)+\epsilon)$-CCE. If the game is zero-sum, then $\bar{\sigma}_i=\frac{1}{T}\sum_{t=1}^T\sigma^t_i$ is an $(\epsilon_{\text{min}}(T)+\epsilon)$-NE.

\subsection{Solve the full game by induction}\label{app-sec:full}

\begin{algorithm}
\caption{Solve the full game $\Gamma(R,s_0,p_0)$}\label{app-alg:full}
\begin{algorithmic}[1]
	\Function{ComputeOracle}{$r,\mathcal{O}_{r-1}$}
		\State Set up the \textsc{ComputeFirstLevel} with $r,T,\mathcal{O}_{r-1}$
		\State $\boldsymbol{p}, \boldsymbol{\sigma}^*\gets$\Call{Prepere}{$d$}
		\State $\mathcal{O}_r\gets$\Call{Approximate}{$\boldsymbol{p}, \boldsymbol{\sigma}^*, \cdot$}
		\State \Return $\mathcal{O}_r$
	\EndFunction
	

	\Function{Solve}{$d, T, R$}
		\State $\mathcal{O}_0\gets\text{Arbitrary}$
		\For{$r\gets 1,\dots R$}
			\State $\mathcal{O}_r\gets$\Call{ComputeOracle}{$r,\mathcal{O}_{r-1}$}
		\EndFor
		\State \Return $\mathcal{O}_1,\dots,\mathcal{O}_R$
	\EndFunction
\end{algorithmic}
	
\end{algorithm}

\begin{lemma}\label{lem:oracle}
	Strategies produced by oracle $\mathcal{O}_r$ is $\epsilon$-NE for $r=1,\dots, R$, where $\epsilon=r(dU+\epsilon_{\text{min}}(T))$.
\end{lemma}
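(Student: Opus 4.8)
The plan is to prove the claim by induction on the depth $r$, tracking how error accumulates through the two mechanisms that build the oracle: the regret-minimization step of Section~\ref{app-sec:one-step} that solves each fixed-prior first-level game, and the belief-space approximation step that lifts a finite set of fixed-prior equilibria to an all-prior oracle. Writing $\delta := dU + \epsilon_{\text{min}}(T)$ for brevity, the target bound reads $r\delta$, so the induction should be arranged so that each additional layer of depth contributes exactly one copy of $\delta$.

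For the base case $r=1$, the game $\Gamma(1,s,p)$ has no first-level subforests, so in \textsc{ComputeFirstLevel} the branch $r>1$ is never entered and the regret minimizer merely solves a one-round game. The no-regret guarantee then yields an $\epsilon_{\text{min}}(T)$-NE for each sampled prior $p_i$, where I invoke the zero-sum hypothesis to pass from the coarse-correlated bound of Section~\ref{app-sec:one-step} to a genuine NE. Applying the approximation Corollary (built on Lemma~\ref{lem:approx}) then adds a $dU$ term, giving a $(dU + \epsilon_{\text{min}}(T)) = \delta$-NE for an arbitrary prior, which is exactly $1\cdot\delta$.

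For the inductive step, I would assume $\mathcal{O}_{r-1}$ produces $(r-1)\delta$-NEs for all depth-$(r-1)$ games. Fixing a sampled prior $p_i$, I run \textsc{ComputeFirstLevel} at depth $r$ using $\mathcal{O}_{r-1}$ as the first-level subforest oracle, queried at the Bayes-updated belief $p'$. By the inductive hypothesis each subforest equilibrium is an $(r-1)\delta$-NE, so the regret-decomposition bound of Section~\ref{app-sec:one-step}, applied with the inner subforest gap set to $(r-1)\delta$ rather than to $\epsilon_{\text{min}}$ alone, shows that the fixed-prior strategy $\bar\sigma$ is an $(\epsilon_{\text{min}}(T) + (r-1)\delta)$-NE for $\Gamma(r,s,p_i)$. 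Feeding this into the approximation Corollary adds one further $dU$, and the arithmetic $dU + \epsilon_{\text{min}}(T) + (r-1)\delta = \delta + (r-1)\delta = r\delta$ closes the induction.

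The main obstacle I anticipate is bookkeeping rather than a deep difficulty: one must be careful to apply the regret decomposition with the correct inner error term (the accumulated subforest NE gap, which grows with $r$, not the single-level regret), and to invoke the zero-sum hypothesis explicitly at each level to convert the CCE guarantee into an NE guarantee \emph{before} the approximation step is applied. A secondary point to verify is that $\mathcal{O}_{r-1}$ is genuinely an all-prior solver, so that querying it at the updated belief $p'$ — which need not coincide with any of the $d$-densely sampled priors — is legitimate under the inductive hypothesis.
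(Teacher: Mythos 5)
Your proposal is correct and follows essentially the same route as the paper's proof: induction on depth, where each level's oracle error is the inner subforest gap fed into the regret decomposition of Section~\ref{app-sec:one-step} (contributing $\epsilon_{\text{min}}(T)$) plus the $dU$ term from the belief-approximation corollary built on Lemma~\ref{lem:approx}. The only cosmetic difference is that the paper anchors the induction at $r=0$ with a vacuous oracle rather than at $r=1$, and your explicit reminders — converting the CCE guarantee to an NE via the zero-sum hypothesis, and checking that $\mathcal{O}_{r-1}$ answers arbitrary Bayes-updated priors — are points the paper uses implicitly.
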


\begin{proof}
We prove by induction. We expand the lemma to include $r=0$. Since $\mathcal{O}_0$ will not be called in Algorithm \ref{app-alg:first}, it is fine to assume that it produces $0$-NE.

Let $\bar{\epsilon}=dU+\epsilon_{\text{min}}(T)$. Suppose $r\geq1$, by induction, we know that $\mathcal{O}_{r-1}$ produces $(r-1)\bar{\epsilon}$-NEs. According to Sec.~\ref{app-sec:one-step}, the strategies produced by Algorithm \ref{app-lg:first} is $(\epsilon_{\text{min}}(T)+(r-1)\bar{\epsilon})$-NE. Therefore, the oracle $\mathcal{O}_r$ created by Algorithm \ref{app-alg:approx} produces $(\epsilon_{\text{min}}(T)+(r-1)\bar{\epsilon}+dU)=r\bar{\epsilon}$-NEs. Thus the induction.
\end{proof}

%
%

\begin{algorithm}
\caption{Convert from oracles to history-based strategy}\label{app-alg:convert}

\begin{algorithmic}[1]
	\Function{GetStrategy}{$h^r, \mathcal{O}_1,\dots,\mathcal{O}_R$}
		\State $h^r=\{s^0,(a_1^0,a_2^0),\dots,(a_1^{r-1},a_2^{r-1}),s^r\}$
		\For{$i\gets 0,\dots, r-1$}
			\State $\sigma^i_1\gets \mathcal{O}_{R-i}(s^i, p^i)$
			\State $p^{i+1}\gets \textsc{UpdateBelief}(p^i, \sigma^i_1, a_1^i)$
		\EndFor
		\State $\sigma_1, \sigma_2\gets \mathcal{O}_{R-r}(s^r, p^r)$
		\State \Return $\sigma_1, \sigma_2$
	\EndFunction
\end{algorithmic}
	
\end{algorithm}

\begin{corollary}
	Algorithm \ref{app-alg:convert} produces $R(dU+\epsilon_{\text{min}}(T))$-PBE strategies.
\end{corollary}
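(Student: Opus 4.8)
The plan is to deduce the global $\epsilon$-PBE guarantee from the per-node near-equilibrium property that Lemma~\ref{lem:oracle} already certifies for the oracles. Recall that, by the $\epsilon$-PBE definition, it suffices to show that at \emph{every} history $h^r$ (whether or not it lies on the play path) neither player can raise its continuation payoff from $h^r$ by more than $R(dU+\epsilon_{\text{min}}(T))$, where the continuation payoff is computed against the belief that consistent Bayesian updating assigns to $h^r$. Writing $\bar\epsilon:=dU+\epsilon_{\text{min}}(T)$, the whole argument therefore amounts to matching, for each $h^r$, the continuation of the profile returned by Algorithm~\ref{app-alg:convert} with an oracle output whose suboptimality is bounded by $(R-r)\bar\epsilon$.

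First I would fix an arbitrary history $h^r=\{s^0,(a^0_1,a^0_2),\dots,s^r\}$ and track the belief $p^r$ produced by the forward sweep in Algorithm~\ref{app-alg:convert}, namely $p^{i+1}=\textsc{UpdateBelief}(p^i,\sigma^i_1,a^i_1)$ with $\sigma^i_1=\mathcal{O}_{R-i}(s^i,p^i)$ and $p^0=p_0$. The key observation is that the subforest hanging below $h^r$, endowed with prior $p^r$, is exactly the induced game $\Gamma(R-r,s^r,p^r)$ of depth $R-r$, and that the profile Algorithm~\ref{app-alg:convert} plays there is $\mathcal{O}_{R-r}(s^r,p^r)$. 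Granting that the strategy executed on the whole subforest coincides with this oracle profile, Lemma~\ref{lem:oracle} states that $\mathcal{O}_{R-r}(s^r,p^r)$ is an $(R-r)\bar\epsilon$-NE for $\Gamma(R-r,s^r,p^r)$, so from $h^r$ no unilateral deviation gains more than $(R-r)\bar\epsilon$. Since $0\le r\le R$ we have $(R-r)\bar\epsilon\le R\bar\epsilon$ uniformly, which is precisely the sequential-rationality inequality defining an $R\bar\epsilon$-PBE. The off-path case needs only a short remark: when the observed action $a^i_1$ has zero probability under player~1's oracle strategy the Bayesian update is undefined and the uniform fallback is substituted, but the oracle still returns a genuine $(R-r)\bar\epsilon$-NE for whatever belief it is queried with, so the bound at $h^r$ is unaffected. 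Collecting the estimate over all histories then yields the claimed $R(dU+\epsilon_{\text{min}}(T))$-PBE, matching Theorem~\ref{thm:main} after the identifications $L\leftrightarrow R$ and $cT^{-1/2}\leftrightarrow\epsilon_{\text{min}}(T)$.

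I expect the main obstacle to be justifying the coincidence asserted above: that the strategy Algorithm~\ref{app-alg:convert} executes on the subforest below $h^r$ really equals the oracle profile $\mathcal{O}_{R-r}(s^r,p^r)$ rather than merely agreeing with it at the first move. This is delicate because Algorithm~\ref{app-alg:convert} recomputes a fresh oracle $\mathcal{O}_{R-r-1}(s^{r+1},p^{r+1})$ at each descendant, whereas $\mathcal{O}_{R-r}(s^r,p^r)$ is itself the weighted mixture (via \textsc{Approximate}) of sample equilibria built inside \textsc{ComputeFirstLevel}, whose continuations sit at the \emph{individual} sample beliefs rather than at the single mixed belief $p^{r+1}$. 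I would close this gap by a downward induction on the remaining depth that propagates the regret decomposition of Section~\ref{app-sec:one-step} through the greedily recomputed continuations, checking that the forward belief $p^{r+1}=\textsc{UpdateBelief}(p^r,\mathcal{O}_{R-r}(s^r,p^r),a^r_1)$ is exactly the belief against which the continuation oracle is queried, so that the mixing-versus-Bayes discrepancy only perturbs the bound by the $dU$ term already absorbed at each level through Lemma~\ref{lem:approx}. Once this consistency is in hand, the remaining inequalities follow immediately from Lemma~\ref{lem:oracle} and $R-r\le R$.
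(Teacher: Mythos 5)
Your core argument --- query Lemma~\ref{lem:oracle} at every history $h^r$ with the forward-updated belief $p^r$, observe $(R-r)(dU+\epsilon_{\text{min}}(T))\le R(dU+\epsilon_{\text{min}}(T))$, and conclude via the definition of $\epsilon$-PBE --- is exactly the paper's own proof, which is in fact entirely implicit: the paper states this corollary with no argument at all, treating it as immediate from Lemma~\ref{lem:oracle} and Algorithm~\ref{app-alg:convert}. In that sense your proposal is correct and takes the same route as the source, and your handling of zero-probability (off-path) actions via the uniform fallback also matches the paper's intent.

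The obstacle you flag in your final paragraph, however, deserves comment, because it is a genuine subtlety that the paper silently skips: the profile Algorithm~\ref{app-alg:convert} actually executes below $h^r$ re-queries $\mathcal{O}_{R-r-1}$ at the belief updated through the \emph{averaged} first-round strategy, whereas the object certified by Lemma~\ref{lem:oracle} has continuations sitting at per-sample, per-iterate posteriors, so the lemma does not literally apply to the executed profile and a downward induction of the kind you describe is indeed needed. One correction to your sketch, though: this discrepancy cannot be absorbed into the $dU$ term via Lemma~\ref{lem:approx}, because Bayes updating is not Lipschitz --- after a low-probability first-round action, two priors within $\ell_1$-distance $d$ can induce posteriors that are arbitrarily far apart, so no uniform $d$-type bound relates the executed continuation's belief to the oracle-internal ones. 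What actually makes the induction close is the exact consistency identity behind your phrase ``the forward belief is exactly the belief against which the continuation oracle is queried'': writing $p'_t$ for the posterior induced by the iterate $\hat\sigma^t_1$ and $\bar p'$ for the posterior induced by the average $\bar\sigma_1=\frac{1}{T}\sum_t\hat\sigma^t_1$, one has $\Pr[a_1\mid\bar\sigma_1]\,\bar p'_\lambda=\frac{1}{T}\sum_{t=1}^{T}\Pr[a_1\mid\hat\sigma^t_1]\,(p'_t)_\lambda$, both sides being $p_\lambda\cdot\frac{1}{T}\sum_{t}\hat\sigma^t_{1,\lambda}(a_1)$; that is, the forward belief is the reach-weighted average of the per-iterate posteriors. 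Re-running the regret decomposition of Sec.~\ref{app-sec:one-step} with the re-queried continuations, this identity, and the linearity of $u_i(p,\sigma)$ in $p$ (as in Lemma~\ref{lem:approx}) is the decisive step; your proposal points in the right direction but leaves it unproved --- exactly as the paper itself does.
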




\end{document}